\begin{document}

\title{Inapproximability of Maximum Weighted Edge Biclique and Its Applications}

\author{Jinsong Tan}

\institute{Department of Computer and Information Science \\
School of Engineering and Applied Science \\
University of Pennsylvania, Philadelphia, PA 19104, USA \\
\email{ jinsong@seas.upenn.edu } }

\maketitle

\begin{abstract} Given a bipartite graph $G = (V_1,V_2,E)$ where edges take
on {\it both} positive and negative weights from set
$\mathcal{S}$, the {\it maximum weighted edge biclique} problem,
or $\mathcal{S}$-MWEB for short, asks to find a bipartite subgraph
whose sum of edge weights is maximized. This problem has various
applications in bioinformatics, machine learning and databases and
its (in)approximability remains open. In this paper, we show that
for a wide range of choices of $\mathcal{S}$, specifically when
$\left| \frac{\min\mathcal{S}} {\max \mathcal{S}} \right| \in
\Omega(\eta^{\delta-1/2}) \cap O(\eta^{1/2-\delta})$ (where $\eta
= \max\{|V_1|, |V_2|\}$, and $\delta \in (0,1/2]$), no polynomial
time algorithm can approximate $\mathcal{S}$-MWEB within a factor
of $n^{\epsilon}$ for some $\epsilon > 0$ unless $\mathsf{RP =
NP}$. This hardness result gives justification of the heuristic
approaches adopted for various applied problems in the
aforementioned areas, and indicates that good approximation
algorithms are unlikely to exist. Specifically, we give two
applications by showing that: 1) finding statistically significant
biclusters in the SAMBA model, proposed in \cite{Tan02} for the
analysis of microarray data, is $n^{\epsilon}$-inapproximable; and
2) no polynomial time algorithm exists for the Minimum Description
Length with Holes problem \cite{Bu05} unless $\mathsf{RP=NP}$.
\end{abstract}

\section{Introduction}

Let $G= (V_1, V_2, E)$ be an undirected bipartite graph. A {\it
biclique subgraph} in $G$ is a complete bipartite subgraph of $G$
and {\it maximum edge biclique} (MEB) is the problem of finding a
biclique subgraph with the most number of edges. MEB is a
well-known problem and received much attention in recent years
because of its wide range of applications in areas including
machine learning \cite{Mis03}, management science \cite{Swa98} and
bioinformatics, where it is found particularly relevant in the
formulation of numerous biclustering problems for biological data
analysis \cite{Che00,BenDor02,Tan02,Zhang02,Tan05}, and we refer
readers to the survey by Madeira and Oliveira \cite{Mad04} for a
fairly extensive discussion on this. Maximum edge biclique is
shown to be NP-hard by Peeters \cite{Pee00} via a reduction from
3SAT. Its approximability status, on the other hand, remains an
open question despite considerable efforts
\cite{Fei02,Fei04,Sub04} \footnote{Note it might be easy to
confuse the MEB problem with the {\it Bipartite Clique} problem
discussed by Khot in \cite{Sub04}. {\it Bipartite Clique}, which
also known as {\it Balanced Complete Bipartite Subgraph}
\cite{Fei04}, aims to maximize the number of vertices of a {\it
balanced} subgraph whereas MEB aims to maximize the total weights
on edges in a (not necessarily balanced) subgraph.}. In
particular, Feige and Kogan \cite{Fei04} conjectured that maximum
edge biclique is hard to approximate within a factor of
$n^\epsilon$ for some $\epsilon > 0$. In this paper, we consider a
weighted formulation of this problem defined as follows

\begin{definition}
\noindent {\bf $\mathcal{S}$-Maximum Weighted Edge Biclique
($\mathcal{S}$-MWEB) }

\noindent {\bf Instance:} A complete bipartite graph $G = (V_1,
V_2, E)$ (throughout the paper, let $\eta = max\{|V_1|, |V_2|\}$
and $n=|V_1|+|V_2|$), a weight function $w_G : E \rightarrow
\mathcal{S}$, where $\mathcal{S}$ is a set consisting of {\it
both} positive and negative integers.

\noindent {\bf Question:} Find a biclique subgraph of $G$ where
the sum of weights on edges is maximized.
\end{definition}

A few comments are in order. First note it is not a lose of
generality but a technical convenience to require the graph be
complete, one can always think of an incomplete bipartite graph as
complete where non-edges are assigned weight 0. Also note we
require that both positive and negative weights be in
$\mathcal{S}$ at the same time because otherwise
$\mathcal{S}$-MWEB becomes a trivial problem.

\noindent Our study of $\mathcal{S}$-MWEB is motivated by the
problem of finding statistically significant biclusters in
microarray data analysis in the SAMBA model \cite{Tan02} and the
Minimum Description Length with Holes (MDLH) problem
\cite{Bu04,Bu05,Guha05}; detailed discussion of the two problems
can be found in Sect. 4. Our main technical contribution of this
paper is to show that if $\mathcal{S}$ satisfies the condition $|
\frac{\min \mathcal{S} } {\max \mathcal{S} } | \in
\Omega(\eta^{\delta-1/2}) \cap O(\eta^{1/2-\delta})$, where
$\delta > 0$ is any arbitrarily small constant, then no polynomial
time algorithm can approximate $\mathcal{S}$-MWEB within a factor
of $n^{\epsilon}$ for some $\epsilon > 0$ unless $\mathsf{RP =
NP}$. This result enables us to answer open questions regarding
the hardness of the SAMBA model and the MDLH problem. Since
maximum edge biclique can be characterized as a special case of
$\mathcal{S}$-MWEB with $\mathcal{S} = \{-\eta, 1\}$, the
$n^{\epsilon}$-inapproximability result also provides interesting
insights into the conjectured $n^{\epsilon}$-inapproximability
\cite{Fei04} of maximum edge biclique.

The rest of the paper is organized in three sections. In Sect. 2,
we present the main technical result by proving the aforementioned
inapproximability of $\mathcal{S}$-MWEB. We give applications of
this by answering hardness questions regarding two applied
problems in Sect. 3. We conclude this work by raising a few open
problems in the last section.

\section {Approximating $\mathcal{S}$-Maximum Edge Biclique is Hard}

We start this section by giving two lemmas about CLIQUE, which
will be used in establishing inapproximability for the biclique
problems we consider later. Lemma \ref{lemma0} is a recent result
by Zuckerman \cite{Zuc06}, obtained by a derandomization of
results of H\r{a}stad \cite{Has99}; Lemma \ref{lemma1} follows
immediately from Lemma \ref{lemma0}.

\begin{lemma}
\label{lemma0} {\bf(\cite{Zuc06})} It is NP-hard to approximate
CLIQUE within a factor of $n^{1-\epsilon}$, for any $\epsilon >
0$.
\end{lemma}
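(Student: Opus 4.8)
The plan is to derive this from the machinery of probabilistically checkable proofs, since the statement is exactly the tight CLIQUE inapproximability established by H\r{a}stad and then made $\mathsf{NP}$-hard (rather than merely hard under $\mathsf{NP} = \mathsf{ZPP}$) by Zuckerman's derandomization \cite{Zuc06}. I would start from a PCP characterization of $\mathsf{NP}$ with perfect completeness, soundness bounded away from $1$, logarithmically many random bits, and a small amortized free-bit complexity. The first reduction step is the classical FGLSS construction: given the verifier, build a graph whose vertices are the accepting local views, grouped by random string, with edges joining pairwise consistent views across distinct random strings. A ``yes'' instance then yields a clique whose size equals the number of random strings, while a ``no'' instance caps the maximum clique at the soundness times that number, so the completeness/soundness gap of the PCP becomes a multiplicative gap in the clique number.

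Because a constant-query PCP only buys a constant-factor gap, the second step is gap amplification. I would amplify the soundness by taking a (derandomized) graph product, effectively running the verifier on many repetitions, which raises the gap to a polynomial in the number of vertices. The delicate point is that naive repetition blows up the number of random bits, and hence the graph size, too fast relative to the gap it buys; to reach a factor of $n^{1-\epsilon}$ one must keep the vertex count near-linear in the number of random strings while simultaneously driving the soundness down, which requires PCPs with essentially optimal amortized free-bit complexity together with randomness-efficient sampling.

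The main obstacle --- and precisely the part supplied by \cite{Zuc06} on top of \cite{Has99} --- is carrying out this amplification \emph{deterministically}. The randomized version samples random strings via a randomized graph product, which only gives hardness under $\mathsf{NP} = \mathsf{ZPP}$; to obtain genuine $\mathsf{NP}$-hardness one replaces the random sampling by explicit dispersers/extractors so that the amplified instance is a polynomial-time computable function of the input. I would therefore plug in an explicit disperser with the right entropy-loss parameters, verify that it preserves both completeness (a large clique survives) and soundness (no spurious large clique is created), and check that the resulting graph has $n$ vertices with a clique gap of $n^{1-\epsilon}$ for the target $\epsilon$. Since all of this is already packaged in \cite{Zuc06}, the write-up itself would simply invoke that theorem; the sketch above is how one would reconstruct it from first principles.
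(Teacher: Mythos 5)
The paper offers no proof of this lemma at all---it is stated purely as a citation of Zuckerman's theorem \cite{Zuc06}---and your write-up likewise concludes by simply invoking that theorem, so you take essentially the same approach. Your background sketch (the FGLSS construction, gap amplification via graph products, and Zuckerman's replacement of randomized sampling by explicit dispersers/extractors to upgrade H\r{a}stad's result \cite{Has99} to genuine $\mathsf{NP}$-hardness) is an accurate account of the machinery inside the cited result, but it is supplementary to, not a departure from, what the paper does.
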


\begin{lemma}
\label{lemma1} For any constant $\epsilon > 0$, no polynomial time
algorithm can approximate CLIQUE within a factor of
$n^{1-\epsilon}$ with probability at least $\frac{1}{poly(n)}$
unless $\mathsf{RP=NP}$.
\end{lemma}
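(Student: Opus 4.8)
The plan is to argue the contrapositive: I assume the existence of a polynomial-time randomized algorithm $A$ that, on every graph, outputs a clique of size at least $\mathrm{OPT}/n^{1-\epsilon}$ with probability at least $1/p(n)$ for some polynomial $p$, and I show that this forces $\mathsf{NP} \subseteq \mathsf{RP}$. Since $\mathsf{RP} \subseteq \mathsf{NP}$ holds unconditionally, this yields $\mathsf{RP} = \mathsf{NP}$, which is the desired conclusion. The whole argument rests on two features of CLIQUE that make a $1/\mathrm{poly}$ success probability cheap to boost: a candidate solution is a vertex set whose clique-ness is checkable in polynomial time, and on a ``NO'' instance of the underlying gap problem a clique of the target size does not exist at all.

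First I would amplify the success probability. Running $A$ independently $\Theta(p(n))$ times and returning the largest among those outputs that are verified to be cliques yields, in polynomial time, an algorithm $A'$ whose output is always a genuine clique and whose size is at least $\mathrm{OPT}/n^{1-\epsilon}$ with probability at least $2/3$ (and $\Theta(p(n)\log n)$ repetitions push the failure probability down to $n^{-\Omega(1)}$). Here verification is essential: because I can discard any output that is not actually a clique, the only way $A'$ can fail is by returning a clique that is valid but too small.

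Next I would invoke the gap reduction underlying Lemma~\ref{lemma0}. That reduction maps an instance of an NP-complete problem to a graph $G$ on $n$ vertices together with a threshold $t$ so that ``YES'' instances satisfy $\mathrm{OPT}(G) \ge t$ while ``NO'' instances satisfy $\mathrm{OPT}(G) < t/n^{1-\epsilon}$. On this graph I run $A'$ and accept if and only if it returns a verified clique of size at least $t/n^{1-\epsilon}$. On a ``NO'' instance no clique of that size exists, so $A'$ can never return one and the procedure rejects with certainty; on a ``YES'' instance $A'$ returns such a clique with probability at least $2/3$. This is precisely a one-sided-error (RP) decision procedure for the NP-complete problem, completing the derivation of $\mathsf{NP} \subseteq \mathsf{RP}$.

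The step I expect to need the most care is matching the exponent $\epsilon$ through the reduction and the amplification: the polynomial blow-up from the $\Theta(p(n))$ repetitions changes the effective instance size, so I must check that the factor $n^{1-\epsilon}$ still separates the two sides of the gap after this blow-up. Because Lemma~\ref{lemma0} is available for every $\epsilon > 0$, I can fix the gap parameter slightly more aggressively than the factor guaranteed by $A$ and absorb all polynomial factors, so this obstacle is bookkeeping rather than an essential difficulty --- which is why the statement follows immediately from Lemma~\ref{lemma0}.
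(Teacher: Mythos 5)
Your proof is correct and takes essentially the same route the paper intends: the paper offers no explicit proof, asserting only that Lemma~\ref{lemma1} ``follows immediately'' from Lemma~\ref{lemma0}, and your argument---amplifying the $1/poly(n)$ success probability via the polynomial-time verifiability of cliques, then using the one-sided error through Zuckerman's gap reduction to obtain an $\mathsf{RP}$ algorithm for an $\mathsf{NP}$-complete problem---is exactly the standard argument being left implicit. Your only worry, the instance-size bookkeeping, is even easier than you suggest, since repeated runs of $A$ are on the same graph $G$ and do not change the instance size at all.
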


\subsection{A Technical Lemma}

\noindent We first describe the construction of a structure called
$\{\gamma, \{\alpha, \beta\}\}$-Product, which will be used in the
proof of our main technical lemma.

\begin{definition} {\bf$(\{\gamma, \{\alpha, \beta\}\}$-Product)}

\noindent{\bf Input:} An instance of $\mathcal{S}$-MWEB on
complete bipartite graph $G = V_1 \times V_2$, where $\gamma \in
\mathcal{S}$ and $\alpha < \gamma < \beta$; an integer $N$.

\noindent{\bf Output:} Complete bipartite graph $G^N = V_1^N
\times V_2^N$ constructed as follows: $V_1^N$ and $V_2^N$ are $N$
duplicates of $V_1$ and $V_2$, respectively. For each edge
$(i,j)\in G^N$, let $(\phi(i),\phi(j))$ be the corresponding edge
in $G$. If $w_G(\phi(i),\phi(j)) = \gamma$, assign weight $\alpha$
or $\beta$ to $(i,j)$ independently and identically at random with
expectation being $\gamma$, denote the weight by random variable
$X$. If $w_G(\phi(i),\phi(j)) \neq \gamma$, then keep the weight
unchanged. Call the weight function constructed this way
$w(\cdot)$.

For any subgraph $H$ of $G^N$, denote by $w_{\gamma}(H)$ (resp.,
$w_{-\gamma}(H)$) the total weight of $H$ contributed by
former-$\gamma$-edges (resp., other edges). Clearly, $w(H) =
w_{\gamma}(H) + w_{-\gamma}(H)$.
\end{definition}

\noindent With a graph product constructed in this randomized
fashion, we have the following lemma.

\begin{lemma} \label{boostingLemma} Given an $\mathcal{S}$-MWEB
instance $G = (V_1, V_2, E)$ where $\gamma \in \mathcal{S}$, and a
number $\delta \in (0,\frac{1}{2}]$; let $\eta =
\max{(|V_1|,|V_2|)}$, $N =
\eta^{\frac{\delta(3-2\delta)+3}{\delta(1+2\delta)} }$, $G^N =
(V_1^N, V_2^N, E)$ be the $\{\gamma, \{\alpha, \beta\} \}$-product
of $G$ and $\mathcal{S'} = ( \mathcal{S} \cup \{\alpha, \beta\} )
- \{ \gamma \}$. If

1. $|\beta - \alpha| = O((N\eta)^{\frac{1}{2}-\delta})$; and

2. there is a polynomial time algorithm that approximates the
$\mathcal{S'}$-MWEB instance within a factor of $\lambda$, where
$\lambda$ is some arbitrary function in the size of the
$\mathcal{S'}$-MWEB instance

then there exists a polynomial time algorithm that approximates
the $\mathcal{S}$-MWEB instance within a factor of $\lambda$, with
probability at least $\frac{1}{poly(n)}$.
\end{lemma}
\begin{proof} For notational convenience, we denote $\eta^{\frac{1}{2}-\delta}$
by $f(\eta)$ throughout the proof. Define random variable $Y = X
-\gamma$, clearly $E[Y] = 0$. Suppose there is a polynomial time
algorithm $\mathbb{A}$ that approximates $\mathcal{S'}$-MWEB
within a factor of $\lambda$, we can then run $\mathbb{A}$ on
$G^N$, the output biclique $G_B^*$ corresponds to $N^2$ bicliques
in $G$ (not necessarily all distinct). Let $G_A^*$ be the most
weighted among these $N^2$ subgraphs of $G$, in the rest of the
proof we show that with high probability, $G_A^*$ is a
$\lambda$-approximation of $\mathcal{S}$-MWEB on $G$.

Denote by $\mathbb{E}_1$ the event that $G_B^*$ does not imply a
$\lambda$-approximation on $G$. Let $\mathcal{H}$ be the set of
subgraphs of $G^N$ that do not imply a $\lambda$-approximation on
$G$, clearly, $|\mathcal{H}| \leq 4^{N\eta}$. Let $H'$ be an
arbitrary element in $\mathcal{H}$, we have the following
inequalities $$
\begin{array}{lllr}
Pr\left\{ \mathbb{E}_1 \right\} & \leq Pr\left\{ \mbox{at least one element in $\mathcal{H}$ is a $\lambda$-approximation of $G^N$} \right\} \\
& \leq 4^{N\eta} \cdot Pr \left\{ H' \mbox{ is a $\lambda$-approximation of $G^N$} \right\} \\
& = 4^{N\eta} \cdot Pr\{ \mathbb{E}_2 \}
\end{array}
$$ where $\mathbb{E}_2$
is the event that $H'$ is a $\lambda$-approximation of $G^N$.

Let the weight of an optimal solution $U_1 \times U_2$ of $G$ be
$K$, denote by $U_1^N \times U_2^N$ the corresponding
$N^2$-duplication in $G^N$. Let $x_1$ and $x_2$ be the number of
former-$\gamma$-edges in $H'$ and $U_1^N \times U_2^N$,
respectively. Suppose $\mathbb{E}_2$ happens, then we must have
$$
\begin{array}{cc}
w_{-\gamma}(H') + x_1 \gamma \leq N^2(\frac{K}{\lambda}-1) \\
w_{-\gamma}(H') + w_{\gamma}(H')  \geq  \frac{1}{\lambda} ( w_{-\gamma}(U_1^N \times U_2^N) + w_{\gamma}(U_1^N \times U_2^N) ) \\
\end{array}
$$ where the first inequality follows from the fact that we only
consider integer weights. Since $w_{-\gamma}(U_1^N \times U_2^N) =
N^2 K - x_2 \gamma$, it implies $$ ( w_{\gamma}(H') - x_1 \gamma )
- \frac{1}{\lambda} ( w_{\gamma}(U_1^N \times U_2^N) - x_2 \gamma)
\geq N^2$$ so we have the following statement on probability
$$
\begin{array}{lr}
Pr\{ \mathbb{E}_2 \} & \leq  Pr \left\{ ( w_{\gamma}(H') - x_1
\gamma ) - \frac{1}{\lambda} ( w_{\gamma}(U_1^N \times U_2^N) -
x_2 \gamma ) \geq N^2 \right\}
\end{array}
$$ Let $z_1$ (resp., $z_2$ and $z_3$) be the number of edges in
$E(H') - E(U_1^N \times U_2^N)$ (~resp., $E(U_1^N \times U_2^N) -
E(H')$ and $E(U_1^N \times U_2^N) \cap E(H') ~$) transformed from
former-$\gamma$-edges in $G$. We have $$
\begin{array}{llr}
& Pr \left\{ ( w_{\gamma}(H') - x_1 \gamma ) - \frac{1}{\lambda} (
w_{\gamma}(U_1^N \times U_2^N) - x_2 \gamma ) \geq N^2
\right\} \\

= & Pr \left\{ \sum_{i=1}^{z_1}{Y_i} - \frac{1}{\lambda}
\sum_{j=1}^{z_2}{Y_j} + \frac{\lambda-1}{\lambda}
\sum_{k=1}^{z_3}{Y_k}
\geq N^2 \right\} \\

= & Pr \left\{ \sum_{i=1}^{z_1}{Y_i} + \frac{1}{\lambda}
\sum_{j=1}^{z_2}{(-Y_j)} + \frac{\lambda-1}{\lambda}
\sum_{k=1}^{z_3}{Y_k}
\geq N^2 \right\} \\

\leq & Pr \left\{ \sum_{i=1}^{z_1}{Y_i} \geq \frac{N^2}{3}
\right\} + Pr \left\{ \frac{1}{\lambda} \sum_{j=1}^{z_2}{(-Y_j)}
\geq \frac{N^2}{3} \right\} + Pr \left\{ \frac{\lambda-1}{\lambda}
\sum_{k=1}^{z_3}{Y_k} \geq
\frac{N^2}{3} \right\} \\

\leq & Pr \left\{ \sum_{i=1}^{z_1}{Y_i} \geq \frac{N^2}{3}
\right\} + Pr \left\{\sum_{j=1}^{z_2}{(-Y_j)} \geq \frac{N^2}{3}
\right\} + Pr \left\{ \sum_{k=1}^{z_3}{Y_k} \geq
\frac{N^2}{3} \right\} \\

\leq & \sum_{i\in \{1,2,3\}}^{} { \left( \exp \left( -2z_i
\left(\frac{N^2}{3z_i(c_1 f(N\eta))} \right)^2 \right) \right) } & \mbox{\hspace{-9em} (Hoeffding bound)} \\

\leq & 3 \cdot \exp \left( -c_2 \cdot
\frac{N^{1+2\delta}}{\eta^{3-2\delta}}  \right) & \mbox{\hspace{-9em} ($z_i \leq \eta^2 N^2$)}\\
\end{array}
$$ where $c_1, c_2$ are constants ($c_2>0$). Now if we set $N = \eta^{\frac{3-2\delta}{1+2\delta} + \theta }$ for some $\theta$, we have
$$ Pr\left\{\mathbb{E}_1 \right\} \leq 4^{N\eta} \cdot Pr \left\{
\mathbb{E}_2 \right\} \leq 3 \cdot \exp{\left( \ln{4} \cdot
\eta^{\frac{4}{(1+2\delta)} + \theta} -c_2 \cdot \eta^{(1+
2\delta) \theta} \right)} $$

For this probability to be bounded by $\frac{1}{2}$ as $\eta$ is
large enough, we need to have $ \frac{4}{1+2\delta} + \theta < (1
+ 2\delta) \theta$. Solving this inequality gives $\theta
> \frac{2}{\delta (1+2\delta)}$. Therefore, for any $\delta \in (0, \frac{1}{2}]$, by setting $N =
\eta^{\frac{\delta(3-2\delta)+3}{\delta(1+2\delta)}}$, we have
$Pr\{\mathbb{E}_1\}$, i.e. the probability that the solution
returned by $\mathbb{A}$ does not imply a $\lambda$-approximation
of $G$, is bounded from above by $\frac{1}{2}$ once input size is
large enough. This gives a polynomial time algorithm that
approximates $\mathcal{S}$-MWEB within a factor of $\lambda$ with
probability at least $\frac{1}{2}$. \qed
\end{proof}

This lemma immediately leads to the following corollary.

\begin{corollary}\label{boostingCoro} Following the construction in Lemma
\ref{boostingLemma}, if $\mathcal{S'}$-MWEB can be approximated
within a factor of $n^{\epsilon'}$, for some $\epsilon'>0$, then
there exists a polynomial time algorithm that approximates
$\mathcal{S}$-MWEB within a factor of $n^\epsilon$, where
$\epsilon = (1+\frac{\delta(3-2\delta) + 3}{\delta (1+2\delta)})
\epsilon'$, with probability at least $\frac{1}{poly(n)}$.
\footnote{Note we are slightly abusing notation here by always
representing the size of a given problem under discussion by $n$.
Here $n$ refers to the size of $\mathcal{S'}$-MWEB (resp.
$\mathcal{S}$-MWEB) when we are talking about approximation factor
$n^{\epsilon'}$ (resp. $n^{\epsilon}$). We adopt the same
convention in the sequel.}
\end{corollary}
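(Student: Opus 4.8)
The plan is to obtain the corollary as a direct specialization of Lemma~\ref{boostingLemma}. That lemma already establishes that a polynomial-time $\lambda$-approximation for $\mathcal{S'}$-MWEB yields, with probability at least $\frac{1}{2}$, a polynomial-time $\lambda$-approximation for $\mathcal{S}$-MWEB, and crucially it allows $\lambda$ to be \emph{any} function of the size of the $\mathcal{S'}$-MWEB instance. So the only remaining work is to pick $\lambda$ appropriately and to carry out the bookkeeping that relates the two problem sizes.

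First I would instantiate $\lambda$ to be $(n')^{\epsilon'}$, where $n'$ denotes the size of the blown-up instance $G^N$. Since Lemma~\ref{boostingLemma} permits $\lambda$ to be an arbitrary function of $n'$, this is a legitimate choice, and it immediately hands us a randomized polynomial-time algorithm that $(n')^{\epsilon'}$-approximates $\mathcal{S}$-MWEB on $G$ with probability at least $\frac{1}{2}$.

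Next I would translate the factor $(n')^{\epsilon'}$, measured in the size of $G^N$, back into a factor measured in the size $n$ of the original instance $G$. Because $V_1^N$ and $V_2^N$ are $N$ duplicates of $V_1$ and $V_2$, we have $n' = N n$. Substituting $N = \eta^{\frac{\delta(3-2\delta)+3}{\delta(1+2\delta)}}$ and using $\eta \leq n$ gives $n' = N n \leq n^{1 + \frac{\delta(3-2\delta)+3}{\delta(1+2\delta)}}$, whence $\lambda = (n')^{\epsilon'} \leq n^{\epsilon}$ with $\epsilon = (1 + \frac{\delta(3-2\delta)+3}{\delta(1+2\delta)})\epsilon'$, exactly the stated exponent. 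I would also observe that $N$ is polynomially bounded in $\eta$, hence in $n$, so $G^N$ has size polynomial in $n$ and the entire reduction (construction plus call to the assumed approximation algorithm) runs in polynomial time.

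The main, and essentially only, obstacle is this size-accounting step: one must be careful that the approximation guarantee delivered by Lemma~\ref{boostingLemma} is stated relative to $|G^N|$ rather than $|G|$, and correctly inflate the exponent to absorb the $N$-fold blow-up; this is precisely the abuse of notation flagged in the footnote. The probability claim then requires no extra argument, since the $\frac{1}{2}$ guarantee of Lemma~\ref{boostingLemma} is already at least $\frac{1}{poly(n)}$.
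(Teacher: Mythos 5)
Your proposal is correct and follows essentially the same route as the paper: instantiate $\lambda = |G^N|^{\epsilon'}$ in Lemma~\ref{boostingLemma} and bound $|G^N| = Nn \leq n^{1+\frac{\delta(3-2\delta)+3}{\delta(1+2\delta)}}$ so that $\lambda \leq n^{\epsilon}$ with the stated exponent. The paper's own proof is just this size-accounting step stated tersely; your additional observations (polynomial blow-up, $\frac{1}{2} \geq \frac{1}{poly(n)}$) are implicit there and unobjectionable.
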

\begin{proof}
Let $|G|$ and $|G^N|$ be the number of nodes in the
$\mathcal{S}$-MWEB and $\mathcal{S'}$-MWEB problem, respectively.
Since $\lambda = |G^N|^{\epsilon'} \leq |G|^{(1+
\frac{\delta(3-2\delta)+3}{\delta(1+2\delta)})\epsilon'}$, our
claim follows from Lemma \ref{boostingLemma}. \qed
\end{proof}

\subsection{$\{-1,0,1\}$-MWEB}

In this section, we prove inapproximability of $\{-1,0,1\}$-MWEB
by giving a reduction from CLIQUE; in subsequence sections, we
prove inapproximability results for more general
$\mathcal{S}$-MWEB by constructing randomized reduction from
$\{-1,0,1\}$-MWEB.

\begin{lemma}
\label{MWEBhard2} The decision version of the $\{-1,0,1\}$-MWEB
problem is $\mathsf{NP}$-complete.
\end{lemma}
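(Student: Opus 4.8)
The plan is to establish NP-completeness in the two usual steps: membership in NP, which is routine, and NP-hardness by an approximation-preserving reduction from CLIQUE (as announced at the head of the subsection). For membership, a certificate for the decision version ``does $G$ admit a biclique of total weight at least $k$?'' is simply a pair $(A,B)$ with $A\subseteq V_1$ and $B\subseteq V_2$; one computes $\sum_{i\in A,\, j\in B} w_G(i,j)$ and compares it with $k$ in polynomial time, so the problem is in NP.

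For the hardness reduction I would start from an instance $(H,k)$ of CLIQUE with $H=(V,E)$ and build a complete bipartite graph $G$ with $V_1=V_2=V$, assigning $w_G(i,i)=1$ on the diagonal, $w_G(i,j)=0$ whenever $i\neq j$ and $\{i,j\}\in E$, and $w_G(i,j)=-1$ whenever $i\neq j$ and $\{i,j\}\notin E$. All weights lie in $\{-1,0,1\}$ as required. The claim to prove is that $H$ has a clique of size at least $k$ if and only if $G$ has a biclique of total weight at least $k$. The forward direction is easy: a clique $C$ yields the biclique $C\times C$, whose $|C|$ diagonal edges each contribute $1$ while every off-diagonal pair is adjacent and contributes $0$, for total weight $|C|$; thus $\mathrm{OPT}(G)\ge \omega(H)$.

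The reverse direction is where the work lies. Given any biclique $A\times B$ of weight $w$, I would set $S=A\cap B$ and let $P$ denote the number of ordered off-diagonal pairs $(i,j)$ with $i\in A$, $j\in B$, $i\ne j$, $\{i,j\}\notin E$, so that $w=|S|-P$. The key observation is that each non-edge $\{i,j\}$ lying inside $S$ is counted twice in $P$ (once as $(i,j)$ and once as $(j,i)$, since $i,j$ belong to both $A$ and $B$), giving $P\ge 2p$ where $p$ is the number of non-edges within $S$. Hence $w=|S|-P\le |S|-2p$. Deleting a vertex cover of the non-edges inside $S$ — which has size at most $p$ — destroys all non-edges and leaves a clique of size at least $|S|-p\ge (|S|-2p)+p\ge w$. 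Taking $w\ge k$ yields a clique of size at least $k$, and taking $w=\mathrm{OPT}(G)$ gives $\omega(H)\ge \mathrm{OPT}(G)$; combined with the forward bound this shows $\mathrm{OPT}(G)=\omega(H)$, so the reduction is exact and therefore approximation-preserving for later use.

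I expect the main obstacle to be precisely this reverse analysis. Reductions of this flavor typically force $A=B$ by placing a large ``forbidding'' negative weight off the intended structure, but the weight set is restricted to $\{-1,0,1\}$, so unbalanced and asymmetric bicliques cannot be excluded by sheer penalty magnitude. The crux is therefore the double-counting inequality $P\ge 2p$ together with the vertex-cover cleanup, which together convert an arbitrary heavy biclique into a clique of at least the same size without appealing to large weights; verifying that this bookkeeping is tight in both directions is the delicate point of the argument.
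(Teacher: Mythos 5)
Your proof is correct, and it is built on exactly the same reduction as the paper's: two copies of $V$ joined completely, weight $1$ on diagonal pairs, $0$ on pairs corresponding to edges of $H$, and $-1$ on pairs corresponding to non-edges. Where you genuinely differ is the analysis of the reverse direction. The paper argues by iterated local improvement: it first symmetrizes the biclique, observing that deleting the vertices in $(\phi_1(U_1)-\phi_2(U_2)) \cup (\phi_2(U_2)-\phi_1(U_1))$ never decreases the weight (all their incident edges weigh $\leq 0$), and then repeatedly deletes both copies of a vertex incident to a weight-$(-1)$ edge, checking that each such deletion raises the weight by at least $1$; the surviving biclique is $-1$-free and symmetric, hence encodes a clique whose size equals its weight. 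Your argument replaces this iteration with a one-shot counting bound: writing $w = |S| - P$ with $S = A\cap B$, double-counting non-edges inside $S$ to get $P \geq 2p$, and then deleting a one-endpoint-per-non-edge vertex cover of size $\leq p$ to leave a clique of size $\geq |S| - p \geq w$. Both routes establish the exact correspondence $\mathrm{OPT}(G') = \omega(H)$, so both make the reduction approximation-preserving, which is what Theorem~\ref{coro1} and Theorem~\ref{coro2} later rely on. Your version is arguably cleaner: it handles asymmetric bicliques ($A \neq B$) directly without a separate symmetrization step, and it avoids the paper's case analysis of which edges are incident to the deleted pair of vertices. You also explicitly verify membership in $\mathsf{NP}$, which the paper only asserts in passing.
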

\begin{proof} We prove this by describing a reduction from CLIQUE. Given a
CLIQUE instance $G = (V,E)$, construct $G' = (V',E')$ such that
$V' =  V_1 \cup V_2 $ where $V_1$, $V_2$ are duplicates of $V$ in
that there exist bijections $\phi_1 : V_1 \rightarrow V$ and
$\phi_2 : V_2 \rightarrow V$. And $$
\begin{array}{lllr}
E' & = & E_1 \cup E_2 \cup E_3 \\
E_1 & = & \{(u,v) ~|~ u \in V_1, v \in V_2 \mbox{ and } (\phi_1(u), \phi_2(v))\in E\} \\
E_2 & = & \{(u,v) ~|~ u \in V_1, v \in V_2, \phi_1(u) \neq \phi_2(v) \mbox{ and } (\phi_1(u),\phi_2(v))\notin E\} \\
E_3 & = & \{(u,v) ~|~ u \in V_1, v \in V_2, \mbox{ and } \phi_1(u)
= \phi_2(v)\}
\end{array}
$$

Clearly, $G'$ is a biclique. Now assign weight 0 to edges in
$E_1$, $-1$ to edges in $E_2$ and 1 to edges in $E_3$. We then
claim that there is a clique of size $k$ in $G$ if and only if
there is a biclique of total edge weight $k$ in $G'$.

First consider the case where there is a clique of size $k$ in
$G$, let $U$ be the set of vertices of the clique, then taking the
subgraph induced by $\phi_1^{-1}(U) \times \phi_2^{-1}(U)$ in $G'$
gives us a biclique of total weight $k$.

Now suppose that there is a biclique $U_1 \times U_2$ of total
weight $k$ in $G'$. Without loss of generality, assume $U_1$ and
$U_2$ correspond to the same subset of vertices in $V$ because if
$(\phi_1(U_1)-\phi_2(U_2)) \cup (\phi_2(U_2)-\phi_1(U_1))$ is not
empty, then removing $(U_1-U_2) \cup (U_2-U_1)$ will never
decrease the total weight of the solution. Given $\phi_1(U_1) =
\phi_2(U_2)$, we argue that there is no edge of weight $-1$ in
biclique $U_1 \times U_2$; suppose otherwise there exists a weight
$-1$ edge $(i_1,j_2)$ ($i_1\in U_1$, and $j_2\in U_2$), then the
corresponding edge $(j_1,i_2)$ ($j_1\in U_1$, and $i_2\in U_2$)
must be of weight $-1$ too and removing $i_1, i_2$ from the
solution biclique will increase total weight by at least 1 because
among all edges incident to $i_1$ and $i_2$, $(i_1,i_2)$ is of
weight 1, $(i_1,j_2)$ and $(i_2,j_1)$ are of weight $-1$ and the
rest are of weights either 0 or $-1$.

Therefore, we have shown that if there is a solution $U_1 \times
U_2$ of weight $k$ in $G'$, $U_1$ and $U_2$ correspond to the same
set of vertices $U \in V$ and $U$ is a clique of size $k$. It is
clear that the reduction can be performed in polynomial time and
the problem is $\mathsf{NP}$, and thus $\mathsf{NP}$-complete.
\qed
\end{proof}

Given Lemma \ref{lemma0}, the following corollary follows
immediately from the above reduction.

\begin{theorem}
\label{coro1} For any constant $\epsilon > 0$, no polynomial time
algorithm can approximate problem $\{-1,0,1\}$-MWEB within a
factor of $n^{1-\epsilon}$ unless $\mathsf{P = NP}$.
\end{theorem}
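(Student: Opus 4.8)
The plan is to treat the reduction in Lemma~\ref{MWEBhard2} as an approximation-preserving reduction and simply track how the instance size and the approximation factor transform under it. The two structural facts I would extract from that reduction are: (i) the optima coincide, $\mathrm{OPT}_{\text{CLIQUE}}(G) = \mathrm{OPT}_{\{-1,0,1\}\text{-MWEB}}(G')$, since a size-$k$ clique yields a weight-$k$ biclique and the converse holds as well; and (ii) the constructed graph $G'$ has exactly $n' = 2|V|$ nodes, only a constant-factor blow-up over the CLIQUE instance $G = (V,E)$.

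First I would verify that the reverse direction of the reduction is robust to approximation, not merely to optimality. Given an arbitrary biclique $U_1 \times U_2$ of weight $W$ in $G'$, the rounding argument from the proof of Lemma~\ref{MWEBhard2} applies verbatim: deleting the symmetric difference $(U_1 - U_2) \cup (U_2 - U_1)$ never decreases the weight, and repeatedly removing a vertex pair incident to a weight $-1$ edge strictly increases it. Since this process terminates only at a biclique whose two sides correspond to a common set $U \subseteq V$ with no weight $-1$ edge, i.e.\ a clique whose size equals the final weight, I obtain in polynomial time a clique of size at least $W$.

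Next I would argue by contradiction. Fixing $\epsilon > 0$, suppose some polynomial-time algorithm approximates $\{-1,0,1\}$-MWEB within factor $(n')^{1-\epsilon}$. Composing the forward reduction, this algorithm, and the rounding step above yields, on input $G$, a clique of size at least $\mathrm{OPT}_{\text{CLIQUE}}(G) / (n')^{1-\epsilon} = \mathrm{OPT}_{\text{CLIQUE}}(G) / (2|V|)^{1-\epsilon}$, so the achieved CLIQUE approximation factor is $(2n)^{1-\epsilon} = 2^{1-\epsilon} n^{1-\epsilon}$, writing $n = |V|$.

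Finally I would absorb the constant-factor size blow-up into the exponent: for all sufficiently large $n$, $2^{1-\epsilon} n^{1-\epsilon} \le n^{1-\epsilon/2}$, so a polynomial-time $(n')^{1-\epsilon}$-approximation for $\{-1,0,1\}$-MWEB would give a polynomial-time $n^{1-\epsilon/2}$-approximation for CLIQUE, contradicting Lemma~\ref{lemma0} unless $\mathsf{P = NP}$. I expect the only genuinely delicate point to be this last step, namely confirming that the base change from $2n$ to $n$ costs an arbitrarily small slack in the exponent, so that the hardness of CLIQUE ``for any $\epsilon' > 0$'' transfers cleanly to hardness of $\{-1,0,1\}$-MWEB ``for any $\epsilon > 0$''; everything else is immediate from the reduction already established.
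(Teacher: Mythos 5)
Your proposal is correct and follows essentially the same route as the paper: the paper's proof of Theorem~\ref{coro1} simply asserts that the reduction of Lemma~\ref{MWEBhard2} ``preserves inapproximability exactly'' and invokes Lemma~\ref{lemma0}, which is precisely what you do, except that you spell out the details the paper leaves implicit (the polynomial-time rounding of an approximate biclique back to a clique of at least the same value, and the absorption of the constant-factor size blow-up into the exponent).
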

\begin{proof} It is obvious that the reduction given in the proof of
Lemma \ref{MWEBhard2} preserves inapproximability exactly, and
given that CLIQUE is hard to approximate within a factor of
$n^{1-\epsilon}$ unless $\mathsf{P = NP}$, the theorem follows.
\qed
\end{proof}

\begin{theorem}
\label{coro2} For any constant $\epsilon > 0$, no polynomial time
algorithm can approximate $\{-1,0,1\}$-MWEB within a factor of
$n^{1-\epsilon}$ with probability at least $\frac{1}{poly(n)}$
unless $\mathsf{RP=NP}$.
\end{theorem}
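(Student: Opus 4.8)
The plan is to transfer the randomized hardness of CLIQUE supplied by Lemma \ref{lemma1} across exactly the same reduction already used in Lemma \ref{MWEBhard2} and Theorem \ref{coro1}; the only difference from Theorem \ref{coro1} is that we now track a success probability rather than deterministic correctness. I would argue by contradiction, assuming a randomized polynomial-time algorithm $\mathbb{B}$ that on every $\{-1,0,1\}$-MWEB instance outputs an $n^{1-\epsilon}$-approximation with probability at least $\frac{1}{poly(n)}$, and deriving from it a randomized CLIQUE approximation that contradicts Lemma \ref{lemma1}.

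The key point is that the map $G \mapsto G'$ of Lemma \ref{MWEBhard2} is deterministic, runs in polynomial time, has $|V(G')| = 2|V(G)|$, and preserves optima exactly: a clique of size $k$ in $G$ gives a biclique of weight $k$ in $G'$, while the vertex-removal argument inside the proof of Lemma \ref{MWEBhard2} turns any biclique of weight $w$ in $G'$ into a clique of size at least $w$ in $G$, again in polynomial time. Given a CLIQUE instance $G$, I would therefore compute $G'$, run $\mathbb{B}$ on $G'$, and apply this deterministic extraction to $\mathbb{B}$'s output. Since neither the reduction nor the extraction introduces randomness, the composite algorithm succeeds precisely when $\mathbb{B}$ does, so its success probability is at least $\frac{1}{poly(|V(G')|)} = \frac{1}{poly(|V(G)|)}$. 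On success, the recovered clique has size at least $\mathrm{OPT}(G')/|V(G')|^{1-\epsilon} = \mathrm{OPT}_{\mathrm{CLIQUE}}(G)/(2|V(G)|)^{1-\epsilon}$, which for any fixed $\epsilon' < \epsilon$ is an $n^{1-\epsilon'}$-approximation of CLIQUE once $|V(G)|$ is large, the constant $2^{1-\epsilon}$ being absorbed into the exponent. This contradicts Lemma \ref{lemma1} unless $\mathsf{RP=NP}$.

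I do not expect a genuine obstacle, as the whole argument is a routine composition of a deterministic reduction with a randomized oracle. The only points needing care are confirming that no randomness enters either the reduction or the extraction---so the $\frac{1}{poly(n)}$ guarantee passes through undiminished---and absorbing the factor-of-two blow-up in instance size into the approximation exponent; both are immediate.
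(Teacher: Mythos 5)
Your proposal is correct and takes essentially the same route as the paper: the paper's proof likewise composes the assumed randomized $\{-1,0,1\}$-MWEB algorithm with the deterministic reduction of Lemma \ref{MWEBhard2} to obtain a randomized approximation for CLIQUE, contradicting Lemma \ref{lemma1} unless $\mathsf{RP=NP}$. Your write-up simply makes explicit the details the paper leaves implicit (the polynomial-time clique extraction from a biclique, preservation of the success probability, and absorbing the factor-two size blow-up into the exponent).
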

\begin{proof} If there exists such a randomized algorithm for $\{-1,0,1\}$-MWEB,
combining it with the reduction given in Lemma \ref{MWEBhard2}, we
obtain an $\mathsf{RP}$ algorithm for CLIQUE. This is impossible
unless $\mathsf{RP = NP}$. \qed
\end{proof}

\subsection{$\{-1,1\}$-MWEB}

\begin{lemma} \label{mainThm1} If there exists a polynomial time
algorithm that approximates $\{-1,1\}$-MWEB within a factor of
$n^{\epsilon}$, then there exists a polynomial time algorithm that
approximates $\{-1,0,1\}$-MWEB within a factor of $n^{5\epsilon}$
with probability at least $\frac{1}{poly(n)}$.
\end{lemma}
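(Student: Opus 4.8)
The plan is to recognize that the gap between $\{-1,0,1\}$-MWEB and $\{-1,1\}$-MWEB is exactly the gap that the product construction of Lemma~\ref{boostingLemma} is designed to close: the only offending weight is $0$, and the construction lets us replace a chosen weight $\gamma$ by random $\pm$ perturbations without changing expected weights. Concretely, I would instantiate the $\{\gamma,\{\alpha,\beta\}\}$-product with $\gamma = 0$, $\alpha = -1$, $\beta = 1$, so that $\mathcal{S} = \{-1,0,1\}$ and $\mathcal{S}' = (\mathcal{S} \cup \{\alpha,\beta\}) - \{\gamma\} = \{-1,1\}$. Because the construction requires $E[X] = \gamma = 0$ with $X$ supported on $\{\alpha,\beta\} = \{-1,1\}$, each former-$0$-edge is independently reassigned weight $+1$ or $-1$, each with probability $\tfrac{1}{2}$; this is a legitimate product, and it transforms a $\{-1,0,1\}$-MWEB instance into a $\{-1,1\}$-MWEB instance of polynomial size.

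Next I would verify the two hypotheses of Lemma~\ref{boostingLemma}. Condition~1 demands $|\beta - \alpha| = O((N\eta)^{1/2-\delta})$; here $|\beta - \alpha| = 2$ is a constant, so it is satisfied the moment $\tfrac{1}{2}-\delta \ge 0$. I would therefore take $\delta = \tfrac{1}{2}$, at which the condition degenerates to the harmless assertion $2 = O(1)$. Condition~2 is precisely the hypothesis of the present lemma, namely the assumed polynomial-time $n^{\epsilon}$-approximation for $\{-1,1\}$-MWEB $= \mathcal{S}'$-MWEB. With both conditions met, the quantitative form of the boosting machinery, Corollary~\ref{boostingCoro}, yields a randomized polynomial-time $n^{\epsilon''}$-approximation for $\mathcal{S}$-MWEB $= \{-1,0,1\}$-MWEB succeeding with probability at least $\tfrac{1}{poly(n)}$, where $\epsilon'' = \bigl(1 + \tfrac{\delta(3-2\delta)+3}{\delta(1+2\delta)}\bigr)\epsilon$.

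It then remains only to evaluate the exponent at the chosen $\delta$. Substituting $\delta = \tfrac{1}{2}$ gives $\tfrac{\delta(3-2\delta)+3}{\delta(1+2\delta)} = \tfrac{(1/2)(2)+3}{(1/2)(2)} = 4$, so that $\epsilon'' = (1+4)\epsilon = 5\epsilon$, matching the claimed factor $n^{5\epsilon}$ exactly. Thus the whole argument reduces to a single well-chosen application of Corollary~\ref{boostingCoro} followed by an arithmetic evaluation.

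The step I would be most careful about is the legitimacy of taking $\delta$ at the boundary value $\tfrac{1}{2}$, since this is what makes the constant-gap condition $|\beta - \alpha| = O((N\eta)^{1/2-\delta})$ hold while simultaneously forcing the exponent coefficient down to exactly $5$. This is permitted because Lemma~\ref{boostingLemma} explicitly allows $\delta \in (0,\tfrac{1}{2}]$, and at the endpoint the size requirement is trivially met. No fresh probabilistic or combinatorial estimates are needed beyond those already established; the Hoeffding-based tail bound and the union bound over $\mathcal{H}$ are entirely subsumed in the cited lemma and corollary.
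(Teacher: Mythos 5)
Your proposal is correct and follows exactly the paper's own argument: the same $\{\gamma,\{\alpha,\beta\}\}$-product with $\gamma=0$, $\alpha=-1$, $\beta=1$, the same choice $\delta=\tfrac{1}{2}$ (hence $N=\eta^4$), and the same invocation of Corollary~\ref{boostingCoro} giving exponent $(1+4)\epsilon = 5\epsilon$. Your write-up is in fact more careful than the paper's, since you explicitly verify the condition $|\beta-\alpha|=O((N\eta)^{1/2-\delta})$ at the boundary value $\delta=\tfrac12$, which the paper leaves implicit.
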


\begin{proof} We prove this by constructing a $\{\gamma, \{\alpha,
\beta\}\}$-Product from $\{-1,0,1\}$-MWEB to $\{-1,1\}$-MWEB by
setting $\gamma = 0$, $\alpha = -1$ and $\beta = 1$. Since $\delta
= \frac{1}{2}$, according to Corollary \ref{boostingCoro}, it is
sufficient to set $N = \eta^4$ so that the probability of
obtaining a $n^{5\epsilon}$-approximation for $\{-1,0,1\}$-MWEB is
at least $\frac{1}{poly(n)}$. \qed
\end{proof}

\begin{theorem} \label{inapprox1} For any constant $\epsilon >0$,
no polynomial time algorithm can approximate $\{-1,1\}$-MWEB
within a factor of $n^{\frac{1}{5}-\epsilon}$ with probability at
least $\frac{1}{poly(n)}$ unless $\mathsf{RP = NP}$.
\end{theorem}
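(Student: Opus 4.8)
The plan is to argue by contradiction, combining the randomized reduction of Lemma~\ref{mainThm1} with the $\{-1,0,1\}$-MWEB hardness of Theorem~\ref{coro2}. Fix an arbitrary constant $\epsilon > 0$ and suppose, for contradiction, that some polynomial time algorithm $\mathbb{B}$ approximates $\{-1,1\}$-MWEB within a factor of $n^{\frac{1}{5}-\epsilon}$. I would feed this hypothesis into Lemma~\ref{mainThm1}, instantiating its exponent with $\frac{1}{5}-\epsilon$ in the role of the ``$\epsilon$'' appearing there. The lemma then hands back a polynomial time algorithm that approximates $\{-1,0,1\}$-MWEB within a factor of $n^{5(\frac{1}{5}-\epsilon)} = n^{1-5\epsilon}$, succeeding with probability at least $\frac{1}{\mathrm{poly}(n)}$.

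The second step is simply to invoke Theorem~\ref{coro2} with the constant $5\epsilon > 0$ in place of its ``$\epsilon$''. That theorem asserts that no polynomial time algorithm can approximate $\{-1,0,1\}$-MWEB within $n^{1-5\epsilon}$ with probability at least $\frac{1}{\mathrm{poly}(n)}$ unless $\mathsf{RP = NP}$. Since the composed algorithm from the previous paragraph does exactly this, we reach a contradiction, so no such $\mathbb{B}$ can exist unless $\mathsf{RP = NP}$. As $\epsilon > 0$ was arbitrary, the claimed $n^{\frac{1}{5}-\epsilon}$-inapproximability follows for every constant $\epsilon$.

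Most of the real content is already packaged inside Lemma~\ref{mainThm1}, so the theorem is essentially a contrapositive bookkeeping argument; the one place where I would be careful is the arithmetic of the exponent. The factor $5$ is not arbitrary: it traces back to the choice $\delta = \frac{1}{2}$ in the $\{\gamma,\{\alpha,\beta\}\}$-product, which forces $N = \eta^{4}$, so the $\{-1,1\}$ instance $G^{N}$ has size $\Theta(n^{5})$ in the original instance size $n$. An $(n')^{\frac{1}{5}-\epsilon}$-approximation on an instance of size $n' = \Theta(n^{5})$ therefore corresponds to an $n^{1-5\epsilon}$-approximation on the source instance (consistent with the notational convention in the footnote to Corollary~\ref{boostingCoro}, where $n$ always denotes the size of the problem currently under discussion). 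I expect this size-tracking — confirming that the blow-up contributes exactly the multiplicative $5$ and that the success probability stays at $\frac{1}{\mathrm{poly}(n)}$ through the composition — to be the only subtle point, and it is already discharged by Lemma~\ref{mainThm1} and its corollary.
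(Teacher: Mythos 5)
Your proposal is correct and follows exactly the paper's own proof, which simply cites Lemma~\ref{mainThm1} and Theorem~\ref{coro2}; your contrapositive composition with the exponent arithmetic $5\left(\frac{1}{5}-\epsilon\right) = 1-5\epsilon$ is precisely what the paper's one-line ``follows directly'' argument means. The additional size-tracking remarks (the $N=\eta^4$ blow-up accounting for the factor $5$) are accurate and consistent with Corollary~\ref{boostingCoro} and its footnote.
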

\begin{proof} This follows directly from Theorem \ref{coro2} and Lemma
\ref{mainThm1}.\qed
\end{proof}

\subsection{$\{-\eta^{\frac{1}{2}-\delta}, 1\}$-MWEB and $\{-\eta^{\delta -\frac{1}{2}}, 1\}$-MWEB}

In this section, we consider the generalized cases of the
$\mathcal{S}$-MWEB problem.

\begin{theorem}
\label{BigCoro} For any $\delta \in (0,\frac{1}{2}]$, there exists
some constant $\epsilon$ such that no polynomial time algorithm
can approximate $\{-\eta^{\frac{1}{2}-\delta},1\}$-MWEB within a
factor of $n^\epsilon$ with probability at least
$\frac{1}{poly(n)}$ unless $\mathsf{RP = NP}$. The same statement
holds for $\{-\eta^{\delta-\frac{1}{2}},1\}$-MWEB.
\end{theorem}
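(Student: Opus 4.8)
The plan is to obtain both statements by feeding the $\{-1,1\}$-MWEB problem into the boosting machinery of Corollary \ref{boostingCoro}. Recall that $\{-1,1\}$-MWEB is already known (Theorem \ref{inapprox1}) to be hard to approximate within $n^{1/5-\epsilon}$ with probability $\geq \frac{1}{poly(n)}$ unless $\mathsf{RP=NP}$. The two cases of the theorem will correspond to the two choices of which of the weights $\pm 1$ to split in the $\{\gamma,\{\alpha,\beta\}\}$-product. Throughout I would write $C = 1 + \frac{\delta(3-2\delta)+3}{\delta(1+2\delta)}$ for the constant supplied by Corollary \ref{boostingCoro}, and keep in mind that the product inflates the base size $\eta$ to $N\eta$, so that ``$\eta$'' in the target weight should be read as the size $N\eta$ of the product instance.

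For $\{-\eta^{1/2-\delta},1\}$-MWEB I would start from a $\{-1,1\}$-MWEB instance $G$ and form the product with $\gamma=-1$, $\alpha=-(N\eta)^{1/2-\delta}$, $\beta=1$. The only surviving weights are $1$ (from kept edges and from the $\beta$ outcome) and $-(N\eta)^{1/2-\delta}$ (the $\alpha$ outcome), so $\mathcal{S}'=\{-(N\eta)^{1/2-\delta},1\}$, which is precisely $\{-\eta^{1/2-\delta},1\}$ with $\eta$ the product-instance size. The checks to carry out are: $\alpha<\gamma<\beta$, valid for $\delta<\frac12$ since then $(N\eta)^{1/2-\delta}>1$; that $p=\frac{2}{1+(N\eta)^{1/2-\delta}}$ realizes $E[X]=\gamma=-1$; and, the one genuinely load-bearing step, condition~1 of Lemma \ref{boostingLemma}, which holds because $|\beta-\alpha| = 1+(N\eta)^{1/2-\delta}=O((N\eta)^{1/2-\delta})$. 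The spread bound in the product lemma is calibrated exactly so that a weight of magnitude $(N\eta)^{1/2-\delta}$ is admissible, which is what makes this case work.

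For $\{-\eta^{\delta-1/2},1\}$-MWEB I would first note that multiplying all weights by the positive constant $(N\eta)^{1/2-\delta}$ leaves the optimal biclique and every approximation ratio unchanged, so this problem is equivalent to $\{-1,\eta^{1/2-\delta}\}$-MWEB. To produce the latter I would again start from $\{-1,1\}$ but now split the \emph{positive} weight: $\gamma=1$, $\alpha=-1$, $\beta=(N\eta)^{1/2-\delta}$, yielding $\mathcal{S}'=\{-1,(N\eta)^{1/2-\delta}\}$. The same three verifications go through: $\alpha<\gamma<\beta$ for $\delta<\frac12$, the expectation is realized by $p=\frac{(N\eta)^{1/2-\delta}-1}{1+(N\eta)^{1/2-\delta}}$, and $|\beta-\alpha|=O((N\eta)^{1/2-\delta})$ again meets condition~1.

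In both cases the conclusion is the same contradiction argument. If the target problem admitted a polynomial-time $n^{\epsilon'}$-approximation with probability $\geq\frac{1}{poly(n)}$, then Corollary \ref{boostingCoro} would give a polynomial-time algorithm approximating $\{-1,1\}$-MWEB within $n^{C\epsilon'}$ with probability $\geq\frac{1}{poly(n)}$; choosing any $\epsilon'<\frac{1}{5C}$ forces $C\epsilon'<\frac15$, contradicting Theorem \ref{inapprox1} unless $\mathsf{RP=NP}$, so $\epsilon=\frac{1}{5C}$ witnesses the claim. The endpoint $\delta=\frac12$ is excluded by the split (it collapses $\alpha$ onto $\gamma$), but there both target classes degenerate to $\{-1,1\}$ itself and the statement is literally Theorem \ref{inapprox1}. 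The only remaining loose end is that $(N\eta)^{1/2-\delta}$ need not be an integer; rounding it to the nearest integer perturbs $|\min\mathcal{S}/\max\mathcal{S}|$ by a $1\pm o(1)$ factor and hence disturbs neither condition~1 nor the asymptotic placement of the ratio. I expect no real difficulty beyond the single delicate point flagged above: checking that the magnitude $(N\eta)^{1/2-\delta}$ of the new weight falls exactly within the admissible spread of the product lemma, together with the scaling equivalence that lets the second case inherit hardness from $\{-1,\eta^{1/2-\delta}\}$.
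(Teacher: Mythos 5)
Your proposal is correct and follows essentially the same route as the paper: the identical product constructions ($\gamma=-1$, $\alpha=-(N\eta)^{1/2-\delta}$, $\beta=1$ for the first case; $\gamma=1$, $\alpha=-1$, $\beta=(N\eta)^{1/2-\delta}$ for the second, followed by rescaling the weights), combined with Corollary \ref{boostingCoro} and the hardness of $\{-1,1\}$-MWEB from Theorem \ref{inapprox1}. Your write-up is in fact somewhat more careful than the paper's, since you additionally verify condition~1 of Lemma \ref{boostingLemma}, the expectation calibration, the degenerate endpoint $\delta=\frac{1}{2}$, and the integrality of the rescaled weights.
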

\begin{proof} We prove this by first construct a $\{\gamma, \{\alpha,
\beta\}\}$-Product from $\{-1,1\}$-MWEB to
$\{-\eta^{\frac{1}{2}-\delta},1\}$-MWEB by setting $\gamma = -1$,
$\alpha = -(N \eta)^{\frac{1}{2}-\delta}$ and $\beta = 1$. By
Corollary \ref{boostingCoro}, we know that for any $\delta \in
(0,\frac{1}{2}]$, if there exists a polynomial time algorithm that
approximates $\{-\eta^{\frac{1}{2}-\delta},1\}$-MWEB within a
factor of $n^{\epsilon}$, then there exists a polynomial time
algorithm that approximates $\{-1,1\}$-MWEB within a factor of
$n^{(1+ \frac{\delta(3-2\delta)+3}{\delta(1+2\delta)})\epsilon}$
with probability at least $\frac{1}{poly(n)}$. So invoking the
hardness result in Theorem \ref{inapprox1} gives the desired
hardness result for $\{-\eta^{\frac{1}{2}-\delta},1\}$-MWEB.

The same conclusion applies to $\{-1, \eta^{\frac{1}{2} -
\delta}\}$-MWEB by setting $\gamma = 1$, $\alpha = -1$ and $\beta
= (N \eta)^{\frac{1}{2}-\delta}$. Since $\eta$ is a constant for
any given graph, we can simply divide each weight in $\{-1,
\eta^{\frac{1}{2} - \delta}\}$ by $\eta^{\frac{1}{2}-\delta}$.
\qed
\end{proof}

Theorem \ref{BigCoro} leads to the following general statement.

\begin{theorem}\label{generalThm}
For any small constant $\delta \in (0,\frac{1}{2}]$, if $\left|
\frac{\min \mathcal{S} }{\max \mathcal{S}} \right| \in
\Omega(\eta^{\delta-1/2}) \cap O(\eta^{1/2-\delta}) $, then there
exists some constant $\epsilon$ such that no polynomial time
algorithm can approximate $\mathcal{S}$-MWEB within a factor of
$n^\epsilon$ with probability at least $\frac{1}{poly(n)}$ unless
$\mathsf{RP=NP}$.
\end{theorem}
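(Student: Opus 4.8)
The plan is to reduce the general statement to the two specific hardness results already established in Theorem~\ref{BigCoro}, namely the inapproximability of $\{-\eta^{\frac{1}{2}-\delta},1\}$-MWEB and $\{-\eta^{\delta-\frac{1}{2}},1\}$-MWEB. The key observation is that the hardness of $\mathcal{S}$-MWEB should depend only on the ratio $\left| \frac{\min\mathcal{S}}{\max\mathcal{S}} \right|$ rather than on the precise values in $\mathcal{S}$, since scaling every weight by a common positive factor preserves approximation ratios exactly. So first I would normalize: assume without loss of generality that $\max\mathcal{S} = 1$ by dividing every weight by $\max\mathcal{S}$, which leaves $\left| \frac{\min\mathcal{S}}{\max\mathcal{S}} \right| = |\min\mathcal{S}|$ unchanged and does not affect the approximability of the instance.

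Next I would exhibit a reduction from one of the two anchor problems to $\mathcal{S}$-MWEB. The natural approach is a $\{\gamma,\{\alpha,\beta\}\}$-Product construction analogous to the one used in Theorem~\ref{BigCoro}. Given the hypothesis $\left| \frac{\min\mathcal{S}}{\max\mathcal{S}} \right| \in \Omega(\eta^{\delta-1/2}) \cap O(\eta^{1/2-\delta})$, the ratio lies (up to constants) between $\eta^{\delta-1/2}$ and $\eta^{1/2-\delta}$. The idea is to select $\gamma \in \mathcal{S}$ and a pair $\{\alpha,\beta\}$ with $\alpha < \gamma < \beta$ and expectation $\gamma$, chosen so that the resulting expanded weight set realizes the ratio regime covered by Theorem~\ref{BigCoro}; one then checks that condition~1 of Lemma~\ref{boostingLemma}, namely $|\beta-\alpha| = O((N\eta)^{\frac{1}{2}-\delta})$, is satisfied by the choice. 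Invoking Corollary~\ref{boostingCoro}, an $n^{\epsilon}$-approximation for $\mathcal{S}$-MWEB then yields an $n^{\epsilon'}$-approximation for the anchor problem with probability at least $\frac{1}{poly(n)}$, which contradicts Theorem~\ref{BigCoro} unless $\mathsf{RP=NP}$.

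The main obstacle I anticipate is handling the two-sided nature of the hypothesis cleanly. The condition gives both a lower bound $\Omega(\eta^{\delta-1/2})$ and an upper bound $O(\eta^{1/2-\delta})$ on the ratio, and these two bounds correspond to the two anchor results proved in Theorem~\ref{BigCoro}: intuitively, when the ratio is small (near $\eta^{\delta-1/2}$) the negative weights are weak relative to the positive ones, matching the $\{-\eta^{\delta-\frac{1}{2}},1\}$ regime, whereas when it is large (near $\eta^{1/2-\delta}$) the negative weights dominate, matching the $\{-\eta^{\frac{1}{2}-\delta},1\}$ regime. The delicate part is arguing that \emph{every} $\mathcal{S}$ whose ratio falls in the stated intersection can be embedded into, or reduced from, one of these two boundary cases while preserving the $\Omega/O$ constants so that the polynomial overhead in Corollary~\ref{boostingCoro} remains controlled and $\epsilon$ stays a genuine positive constant. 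I would therefore split into cases according to whether the ratio is closer to the lower or upper endpoint, apply the corresponding construction from Theorem~\ref{BigCoro}, and verify in each case that the hypotheses of Lemma~\ref{boostingLemma} hold for the chosen $\gamma$, $\alpha$, $\beta$ and $N = \eta^{\frac{\delta(3-2\delta)+3}{\delta(1+2\delta)}}$.
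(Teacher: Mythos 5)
Your high-level strategy --- normalize $\mathcal{S}$ by scaling so that only the ratio $\left|\frac{\min\mathcal{S}}{\max\mathcal{S}}\right|$ matters, restrict attention to the two weights $\{\min\mathcal{S},\max\mathcal{S}\}$, split into cases according to whether the ratio is at least or at most $1$, and transfer hardness via the $\{\gamma,\{\alpha,\beta\}\}$-Product and Corollary~\ref{boostingCoro} --- is the right one, and it matches what the paper intends (the paper gives no explicit proof, asserting only that Theorem~\ref{BigCoro} ``leads to'' the statement). However, your concrete reduction plan is miswired in its direction, and this is a genuine gap. In Lemma~\ref{boostingLemma} the product is applied to an instance of the \emph{source} (known-hard) problem: $\gamma$ must belong to the source weight set, the output set $\mathcal{S'}=(\mathcal{S}\cup\{\alpha,\beta\})-\{\gamma\}$ is the \emph{target}, and an approximation algorithm for the target yields one for the source. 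You propose to ``select $\gamma\in\mathcal{S}$'' (the given set) while choosing $\{\alpha,\beta\}$ so that the resulting expanded set ``realizes the ratio regime covered by Theorem~\ref{BigCoro}''; with that setup the given $\mathcal{S}$ is the source and the anchor set is the target, so Corollary~\ref{boostingCoro} would only show that an approximation for the \emph{anchor} problem yields one for $\mathcal{S}$-MWEB, i.e.\ that hardness of $\mathcal{S}$-MWEB implies hardness of the anchors --- which is circular, since hardness of $\mathcal{S}$-MWEB is exactly what is to be proved. Your stated conclusion (an approximation for $\mathcal{S}$-MWEB yields one for the anchor) is the opposite of what the corollary gives under your choice of $\gamma$.

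Reading you charitably the other way (source $=$ anchor, target $=$ $\mathcal{S}$, consistent with your phrase ``reduction from one of the two anchor problems to $\mathcal{S}$-MWEB'') also fails mechanically: the product requires $\alpha<\gamma<\beta$ with $\gamma$ a source weight, and it leaves all non-$\gamma$ edges untouched. Taking $\gamma=-\eta^{\frac{1}{2}-\delta}$ forces $\alpha<-\eta^{\frac{1}{2}-\delta}$, which no weight from a (rescaled) set of ratio $O(\eta^{\frac{1}{2}-\delta})$ can supply unless you rescale so aggressively that the untouched weight-$1$ edges produce a three-valued output set not contained in any rescaling of $\{\min\mathcal{S},\max\mathcal{S}\}$. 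The fix is to bypass the anchors entirely and rerun the proof of Theorem~\ref{BigCoro} itself: the source is always $\{-1,1\}$-MWEB (hard by Theorem~\ref{inapprox1}); writing $r=\left|\frac{\min\mathcal{S}}{\max\mathcal{S}}\right|$ and normalizing so $\max\mathcal{S}$ becomes $1$, take $\gamma=-1$, $\beta=1$, $\alpha=-r$ when $r\geq 1$, and $\gamma=1$, $\alpha=-1$, $\beta=1/r$ when $r\leq 1$. The key observation making this work for \emph{every} ratio in the hypothesized range, not just the extreme values, is that condition~1 of Lemma~\ref{boostingLemma} is a one-sided $O(\cdot)$ bound, so any $|\beta-\alpha|=O((N\eta)^{\frac{1}{2}-\delta})$ is acceptable; Corollary~\ref{boostingCoro} then converts an $n^{\epsilon'}$-approximation for the two-valued restriction of $\mathcal{S}$-MWEB into an $n^{\epsilon}$-approximation for $\{-1,1\}$-MWEB with probability at least $\frac{1}{poly(n)}$, contradicting Theorem~\ref{inapprox1} unless $\mathsf{RP=NP}$.
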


\section{Two Applications}

In this section, we describe two applications of the results
establish in Sect. 3 by proving hardness and inapproximability of
problems found in practice.

\subsection{SAMBA Model is Hard}

Microarray technology has been the latest technological
breakthrough in biological and biomedical research; in many
applications, a key step in analyzing gene expression data
obtained through microarray is the identification of a bicluster
satisfying certain properties and with largest area (see the
survey \cite{Mad04} for a fairly extensive discussion on this).

In particular, Tanay {\it et. al.} \cite{Tan02} considered the
Statistical-Algorithmic Method for Bicluster Analysis (SAMBA)
model. In their formulation, a complete bipartite graph is given
where one side corresponds to genes and the other size corresponds
to conditions. An edges $(u,v)$ is assigned a real weight which
could be either positive or negative, depending on the expression
level of gene $u$ in condition $v$, in a way such that heavy
subgraphs corresponds to statistically significant biclusters. Two
weight-assigning schemes are considered in their paper. In the
first, or simple statistical model, a tight upper-bound on the
probability of an observed biclusters in computed; in the second,
or refined statistical model, the weights are assigned in a way
such that a maximum weight biclique subgraph corresponds to a
maximum likelihood bicluster.

\paragraph{\bf The Simple SAMBA Statistical Model:}  Let $H =
(V_1', V_2', E')$ be a subgraph of $G = (V_1,V_2,E)$,
$\overline{E'} = \{ V_1' \times V_2' \} - E'$ and $p =
\frac{|E|}{|V_1||V_2|}$. The simple statistical model assumes that
edges occur independently and identically at random with
probability $p$. Denote by $BT(k,p,n)$ the probability of
observing $k$ or more successes in $n$ binomial trials, the
probability of observing a graph at least as dense as $H$ is thus
$p(H) = BT(|E'|,p,|V_1'||V_2'|)$. This model assumes $p <
\frac{1}{2}$ and $|V_1'||V_2'| \ll |V_1||V_2|$, therefore $p(H)$
is upper bounded by
$$p^*(H) = 2^{|V_1'||V_2'|} p^{|E'|} (1-p)^{|V_1'||V_2'|-|E'|}$$
The goal of this model is thus to find a subgraph $H$ with the
smallest $p^*(H)$. This is equivalent to maximizing
$$-\log{p^*(H)} = |E'| (-1 - \log{p}) + (|V_1'||V_2'|-|E'|) (-1 -
\log{(1-p)})$$ which is essentially solving a $\mathcal{S}$-MWEB
problem that assigns either positive weight $(-1- \log{p})$ or
negative weight $(-1-\log{(1-p)})$ to an edge $(u,v)$, depending
on whether gene $u$ express or not in condition $v$, respectively.
The summation of edge weights over $H$ is defined as the {\it
statistical significance} of $H$.

Since $\frac{1}{\eta^2} \leq p < \frac{1}{2}$, asymptotically we
have $\frac{-1-\log{(1-p)}}{-1-\log{p}} \in
\Omega(\frac{1}{\log{\eta}}) \cap O(1)$. Invoking Theorem
\ref{generalThm} gives the following.

\begin{theorem} For the Simple SAMBA Statistical
model, there exists some $\epsilon > 0$ such that no polynomial
time algorithm, possibly randomized, can find a bicluster whose
statistical significance is within a factor of $n^{\epsilon}$ of
optimal unless $\mathsf{RP=NP}$.
\end{theorem}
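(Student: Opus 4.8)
The plan is to recognize that the Simple SAMBA Statistical model is, after taking logarithms, nothing more than a special case of $\mathcal{S}$-MWEB, and then to verify that its induced weight set lands inside the hard regime already covered by Theorem \ref{generalThm}. The discussion preceding the statement already establishes that maximizing the statistical significance $-\log p^*(H)$ of a subgraph $H$ is literally the same optimization as $\mathcal{S}$-MWEB on the complete bipartite graph in which every cell carries weight $-1-\log p$ when the corresponding gene is expressed in the corresponding condition and weight $-1-\log(1-p)$ otherwise. Since the two objective functions are equal on a fixed instance, a polynomial-time algorithm returning a bicluster whose statistical significance is within a factor $n^\epsilon$ of optimal is exactly an $n^\epsilon$-approximation for the associated $\mathcal{S}$-MWEB instance; the approximation guarantee passes between the two problems with no loss in the factor, so it suffices to exhibit hard $\mathcal{S}$-MWEB instances that are realizable as SAMBA inputs.

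First I would pin down the weight set $\mathcal{S} = \{-1-\log(1-p),\, -1-\log p\}$. The hypothesis $p<\tfrac12$ makes $-1-\log p$ strictly positive and $-1-\log(1-p)$ strictly negative, so $\mathcal{S}$ genuinely contains both a positive and a negative weight, as the definition of $\mathcal{S}$-MWEB demands. Next I would estimate the ratio $\left|\frac{\min\mathcal{S}}{\max\mathcal{S}}\right| = \frac{1+\log(1-p)}{-1-\log p}$ over the admissible window $\frac{1}{\eta^2}\le p<\frac12$: at the dense end $p\to\frac12$ the numerator and denominator vanish at the same first order, so the ratio tends to the constant $1$, giving the $O(1)$ bound; at the sparse end $p=\frac{1}{\eta^2}$ the numerator is $\Theta(1)$ while the denominator is $\Theta(\log\eta)$, giving the $\Omega(1/\log\eta)$ bound. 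This reproduces the assertion $\frac{-1-\log(1-p)}{-1-\log p}\in\Omega(\frac{1}{\log\eta})\cap O(1)$ quoted in the text.

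Then I would verify the containment $\Omega(\frac{1}{\log\eta})\cap O(1)\subseteq \Omega(\eta^{\delta-1/2})\cap O(\eta^{1/2-\delta})$ for any fixed $\delta\in(0,\frac12)$: because $\eta^{\delta-1/2}\to0$ polynomially it is eventually dominated by the slower-decaying $1/\log\eta$, and because $\eta^{1/2-\delta}\to\infty$ it eventually dominates any constant. With the ratio hypothesis of Theorem \ref{generalThm} thereby confirmed, invoking that theorem produces the constant $\epsilon>0$ and the claimed $n^\epsilon$-inapproximability with probability at least $\frac{1}{poly(n)}$ unless $\mathsf{RP=NP}$.

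The main obstacle I anticipate is not the asymptotics but a matching-of-instances issue. Theorem \ref{generalThm} is phrased for a weight set $\mathcal{S}$ whose ratio obeys the bound, yet in SAMBA the weights are reals derived from $p$, and $p$ is itself fixed by the edge density of whatever graph is handed to the model; hence the hard $\mathcal{S}$-MWEB instances emerging from the reduction chain must actually be realizable as SAMBA inputs with a consistent $p$ lying in $[\eta^{-2},\frac12)$. I would handle this by padding the hard instance with a suitable number of expressed and non-expressed cells so as to force $p$ into range, and by rescaling and rounding the logarithmic weights to integers, arguing that each step perturbs the objective by a lower-order amount and so preserves the $n^\epsilon$ gap. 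Cleanly carrying out this density calibration and discretization, rather than the invocation of Theorem \ref{generalThm}, is where the genuine care is required.
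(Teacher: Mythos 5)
Your proposal follows exactly the paper's route: recast the Simple SAMBA objective $-\log p^*(H)$ as $\mathcal{S}$-MWEB with $\mathcal{S}=\{-1-\log(1-p),\,-1-\log p\}$, verify that $\left|\frac{\min\mathcal{S}}{\max\mathcal{S}}\right|\in\Omega(\frac{1}{\log\eta})\cap O(1)$ lies inside the polynomial window $\Omega(\eta^{\delta-1/2})\cap O(\eta^{1/2-\delta})$, and invoke Theorem \ref{generalThm}; this is precisely the paper's (one-line) proof. The realizability and discretization issue you flag --- that the hard $\mathcal{S}$-MWEB instances must have edge density consistent with the $p$ that defines their own weights, and that the logarithmic weights are not integers --- is a genuine subtlety, but it is one the paper itself silently skips, so your treatment is if anything more careful than the original.
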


\paragraph{\bf The Refined SAMBA Statistical Model:} In the refined
model, each edge $(u,v)$ is assumed to take an independent
Bernoulli trial with parameter $p_{u,v}$, therefore $p(H) =
(\prod_{(u,v)\in E'} p_{u,v})(\prod_{(u,v)\in \overline{E'}}
(1-p_{u,v}))$ is the probability of observing a subgraph $H$.
Since $p(H)$ generally decreases as the size of $H$ increases,
Tanay {\it et al.} aims to find a bicluster with the largest
(normalized) likelihood ratio $ L(H) = \dfrac{(\prod_{(u,v)\in E'}
p_c)(\prod_{(u,v)\in \overline{E'}} (1-p_c))}{p(H)} $, where $p_c
> \max_{(u,v)\in E} p_{u,v}$ is a constant probability and chosen
with biologically sound assumptions. Note this is equivalent to
maximizing the log-likelihood ratio $$ \log{L(H)} = \displaystyle
\sum_{(u,v)\in E'} \log{\frac{p_c}{p_{u,v}}}  + \sum_{(u,v) \in
\overline{E'} } \log{\frac{1-p_c}{1-p_{u,v}} }  $$ With this
formulation, each edge is assigned weight either $\log{
\frac{p_c}{p_{u,v}} } > 0$ or $\log{ \frac{1-p_c}{1-p_{u,v}} } <
0$ and finding the most statistically significant bicluster is
equivalent to solving $\mathcal{S}$-MWEB with $\mathcal{S} =
\{\log{\frac{1-p_c}{1-p_{u,v}}}, \log{\frac{p_c}{p_{u,v}}} \}$.
Since $p_c$ is a constant and $\frac{1}{\eta^2} \leq p_{u,v} <
p_c$, we have
$\frac{\log{(1-p_c)}-\log{(1-p_{u,v})}}{\log{p_c}-\log{p_{u,v}}}
\in \Omega(\frac{1}{\log{\eta}}) \cap O(1)$. Invoking Theorem
\ref{generalThm} gives the following.

\begin{theorem} For the Refined SAMBA Statistical model,
there exists some $\epsilon > 0$ such that no polynomial time
algorithm, possibly randomized, can find a bicluster whose
log-likelihood is within a factor of $n^{\epsilon}$ of optimal
unless $\mathsf{RP=NP}$.
\end{theorem}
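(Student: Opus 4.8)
The plan is to obtain the theorem as a direct instantiation of the general inapproximability of Theorem \ref{generalThm}: I would exhibit the Refined SAMBA objective as an $\mathcal{S}$-MWEB instance whose weight set meets the ratio hypothesis of that theorem, so that the $n^\epsilon$-inapproximability (against randomized algorithms, under $\mathsf{RP=NP}$) transfers verbatim. The work therefore splits into making the MWEB correspondence precise, checking the ratio condition, and confirming that the hard MWEB instances are genuinely expressible in the SAMBA formalism.

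First I would pin down the equivalence between maximizing $\log L(H)$ and $\mathcal{S}$-MWEB. Once the bicluster is chosen to be a vertex pair $V_1' \times V_2'$, the sets $E'$ and $\overline{E'}$ partition $V_1' \times V_2'$ according to whether each pair is observed present or absent, so every pair $(u,v)$ contributes a \emph{fixed} weight: $\log\frac{p_c}{p_{u,v}}$ when present and $\log\frac{1-p_c}{1-p_{u,v}}$ when absent. Since $p_c > \max_{(u,v)} p_{u,v}$, the first quantity is strictly positive; because $p_c > p_{u,v}$ also forces $1-p_c < 1-p_{u,v}$, the second is strictly negative. Thus the weight set $\mathcal{S}$ contains both signs, and maximizing $\log L(H)$ over biclusters is exactly $\mathcal{S}$-MWEB on the complete bipartite graph carrying these weights.

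Next I would verify the ratio condition. The largest weight is attained at the smallest admissible probability $p_{u,v}=1/\eta^2$ on a present pair, giving $\max\mathcal{S} = \log(p_c\eta^2) = \Theta(\log\eta)$, while the most negative weight is attained at $p_{u,v}=1/\eta^2$ on an absent pair, giving $\min\mathcal{S} = \log\frac{1-p_c}{1-1/\eta^2} = \Theta(1)$ because $p_c$ is a constant. Hence $\left|\frac{\min\mathcal{S}}{\max\mathcal{S}}\right| = \Theta(1/\log\eta)$. Since $1/\log\eta$ decays only polylogarithmically, it sits above $\eta^{\delta-1/2}$ and below $\eta^{1/2-\delta}$ for every constant $\delta\in(0,1/2)$ (the polynomial $\eta^{1/2-\delta}$ dominates $\log\eta$, and $1/\log\eta\to 0$ while $\eta^{1/2-\delta}\to\infty$). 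Note $\delta=1/2$ must be excluded, since there the condition degenerates to $\Theta(1)$. For any admissible $\delta$ the hypothesis of Theorem \ref{generalThm} is met, and invoking it yields the desired bound.

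The step I expect to demand the most care is confirming that the hard $\mathcal{S}$-MWEB instances guaranteed by Theorem \ref{generalThm} can actually be realized as Refined SAMBA instances respecting the structural constraints $p_c$ constant and $p_{u,v}\ge 1/\eta^2$. Since the underlying hardness rests on two-weight instances using the extremes $\min\mathcal{S}$ and $\max\mathcal{S}$, I would realize them by fixing $p_{u,v}=1/\eta^2$ uniformly and encoding the MWEB weight assignment through the observed present/absent pattern, which reproduces exactly those two extreme values. A secondary technical point is that SAMBA weights are real logarithms whereas $\mathcal{S}$-MWEB is defined over integers; I would handle this by scaling all weights by a sufficiently large polynomial factor and rounding, so that the total rounding error over the $O(\eta^2)$ edges of any biclique is negligible relative to the gap exploited by the $n^\epsilon$ bound, preserving the inapproximability.
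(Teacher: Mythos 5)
Your proposal is correct and follows essentially the same route as the paper: cast the refined SAMBA log-likelihood objective as $\mathcal{S}$-MWEB with weight set $\{\log\frac{1-p_c}{1-p_{u,v}}, \log\frac{p_c}{p_{u,v}}\}$, observe that $\left|\frac{\min\mathcal{S}}{\max\mathcal{S}}\right|$ lies in $\Omega(1/\log\eta)\cap O(1)$ and hence satisfies the ratio hypothesis for any small constant $\delta$, and invoke Theorem \ref{generalThm}. Your additional care about realizing the hard two-weight instances within the SAMBA constraints (uniform $p_{u,v}=1/\eta^2$) and about scaling/rounding the real-valued logarithmic weights fills in details the paper leaves implicit, but does not change the argument.
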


\subsection{Minimum Description Length with Holes (MDLH) is Hard}

Bu {\it et. al} \cite{Bu05} considered the Minimum Description
Length with Holes problem (defined in the following); the
2-dimensional case is claimed $\mathsf{NP}$-hard in this paper and
the proof is referred to \cite{Bu04}. However, the proof given in
\cite{Bu04} suffers from an error in its reduction\footnote{In
Lemma 3.2.1 of \cite{Bu04}, the reduction from CLIQUE to CEW is
incorrect.}, thus whether MDLH is NP-complete remains unsettled.
In this section, by employing the results established in the
previous sections, we show that no polynomial time algorithm
exists for MDLH, under the slightly weaker (than $\mathsf{P \neq
NP}$) but widely believed assumption $\mathsf{RP \neq NP}$.

We first briefly describe the Minimum Description Length
summarization with Holes problem; for a detailed discussion of the
subject, we refer the readers to \cite{Bu04,Bu05}.

Suppose one is given a $k$-dimensional binary matrix $M$, where
each entry is of value either 1, which is of interest, or of value
0, which is not of interest. Besides, there are also $k$
hierarchies (trees) associated with each dimension, namely $T_1,
T_2, ..., T_k$, each of height $l_1, l_2, ..., l_k$ respectively.
Define $level$ $l = \text{max}_i(l_i)$. For each $T_i$, there is a
bijection between its leafs and the 'hyperplanes' in the $i$th
dimension (e.g. in a 2-dimensional matrix, these hyperplanes
corresponds to rows and columns). A $region$ is a tuple
$(x_1,x_2,...,x_k)$, where $x_i$ is a leaf node or an internal
node in hierarchy $T_i$. Region $(x_1,x_2,...,x_k)$ is said to
$cover$ cell $(c_1,c_2,...,c_k)$ if $c_i$ is a descendant of
$x_i$, for all $1 \leq i \leq k$. A {\it $k$-dimensional $l$-level
MDLH summary} is defined as two sets $S$ and $H$, where 1) $S$ is
a set of regions covering all the 1-entries in $M$; and 2) $H$ is
the set of 0-entries covered (undesirably) by $S$ and to be
excluded from the summary. The {\it length} of a summary is
defined as $|S|+|H|$, and the MDLH problem asks the question if
there exists a MDLH summary of length at most $K$, for a given $K
> 0$.

In an effort to establish hardness of MDLH, we first define the
following problem, which serves as an intermediate problem
bridging $\{-1,1\}$-MWEB and MDLH.

\begin{definition}{\bf (Problem $\mathcal{P}$)} \\
\noindent{\bf Instance: } A complete bipartite graph $G =
(V_1,V_2,E)$ where each edge takes on a value in $\{-1,1\}$, and a
positive integer $k$.

\noindent{\bf Question: } Does there exist an induced subgraph (a
biclique $U_1 \times U_2$) whose total weight of edges is
$\omega$, such that $|U_1| + |U_2| + \omega \geq k$.
\end{definition}

\begin{lemma}
\label{problemP} No polynomial time algorithm exists for Problem
$\mathcal{P}$ unless $\mathsf{RP = NP}$.
\end{lemma}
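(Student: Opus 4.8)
The plan is to reduce $\{-1,1\}$-MWEB to Problem $\mathcal{P}$ so that a polynomial time algorithm for $\mathcal{P}$ would yield one for $\{-1,1\}$-MWEB (in the approximate sense guaranteed by Theorem \ref{inapprox1}), which is impossible unless $\mathsf{RP=NP}$. The essential observation is that Problem $\mathcal{P}$ is a decision/threshold version that bundles the biclique weight $\omega$ together with the vertex count $|U_1|+|U_2|$, whereas $\{-1,1\}$-MWEB asks only to maximize $\omega$. The main task is therefore to neutralize the additive $|U_1|+|U_2|$ term so that the threshold condition $|U_1|+|U_2|+\omega\ge k$ faithfully tracks the optimal biclique weight.

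First I would take an arbitrary $\{-1,1\}$-MWEB instance $G=(V_1,V_2,E)$ and build an instance of $\mathcal{P}$ on a complete bipartite graph $G'$, padding $G$ with gadget vertices whose incident edges carry weight $-1$ in such a way that including any padding vertex in a candidate biclique $U_1\times U_2$ decreases the quantity $|U_1|+|U_2|+\omega$. Concretely, attaching a block of heavily $-1$-weighted edges lets one ensure that the $+1$ contribution of an added vertex to $|U_1|+|U_2|$ is outweighed by the $-1$ contributions it makes to $\omega$, so no optimal solution to the $\mathcal{P}$-instance ever uses a padding vertex. Then, over the original vertices, the term $|U_1|+|U_2|$ is either forced to a fixed value or can be absorbed into a shift of the threshold $k$, so that ``there is a biclique with $|U_1|+|U_2|+\omega\ge k$'' becomes equivalent to ``there is a biclique of weight at least some $k'$'' in $G$. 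Sweeping over the polynomially many possible values of $k$ (or of the vertex count) then lets a decision oracle for $\mathcal{P}$ locate the optimal biclique weight of the $\{-1,1\}$-MWEB instance by binary search.

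I would then argue that this reduction preserves hardness: a polynomial time algorithm for $\mathcal{P}$, invoked polynomially many times across the candidate thresholds, recovers the optimal weight (and with standard bookkeeping an optimal biclique) of the $\{-1,1\}$-MWEB instance exactly, which in particular gives an $n^{\epsilon}$-approximation for every $\epsilon$, contradicting Theorem \ref{inapprox1} unless $\mathsf{RP=NP}$. Since the success probability of the underlying randomized hardness is $\frac{1}{poly(n)}$, I would note that running the $\mathcal{P}$-algorithm alongside the randomized reduction of the earlier sections only incurs a further polynomial factor, which is harmless.

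The step I expect to be the main obstacle is designing the padding gadget so that the coupling $|U_1|+|U_2|+\omega$ cannot be gamed. The difficulty is that $\mathcal{P}$ rewards \emph{large} vertex sets (each vertex adds $1$) while $\{-1,1\}$-MWEB is indifferent to size except through edge weights; I must prevent a solver from inflating $|U_1|+|U_2|$ by grabbing extra vertices whose $-1$ edges do not cost enough to offset the $+1$ per-vertex bonus. Balancing the gadget weights against this bonus, and verifying that the optimal $\mathcal{P}$-solution is supported exactly on a biclique of the original graph, is where the care lies; the remaining threshold-sweep argument is routine.
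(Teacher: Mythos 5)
There is a genuine gap at exactly the point you flag as the main obstacle: you never give a mechanism that actually neutralizes the $|U_1|+|U_2|$ term, and the mechanisms you do sketch cannot work. Your padding gadget (extra vertices with all $-1$ edges) only ensures that the \emph{new} vertices are never used; it does nothing about the fact that each \emph{original} vertex still earns a $+1$ bonus in the objective of $\mathcal{P}$. Absorbing that bonus ``into a shift of the threshold $k$'' would require $|U_1|+|U_2|$ to be the same for all candidate bicliques, which it is not; and sweeping or binary-searching over $k$ only recovers $\max(|U_1|+|U_2|+\omega)$, the optimum of the \emph{coupled} objective, which can exceed $\max\omega$ by an additive term up to $2n$ and can be attained by an entirely different biclique (a large biclique of slightly negative weight can beat a small biclique of large positive weight). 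Since the weights in Problem $\mathcal{P}$ are confined to $\{-1,1\}$, you also cannot rescale edge weights to drown out the vertex term, so no per-vertex gadget of this kind closes the gap.

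The paper closes it with a different idea: amplification by duplication. It builds $G^N$ from $N=(|V_1|+|V_2|)^2$ copies of each side of $G$, wired so that every pair of copies induces a copy of $G$. A weight-$k$ biclique of $G$ lifts to a solution of $\mathcal{P}$ with objective at least $N^2k$; conversely, if every biclique of $G$ has weight at most $k-1$, then any solution of $\mathcal{P}$ on $G^N$ has edge weight at most $N^2(k-1)$ (its edge set decomposes into at most $N^2$ bicliques of $G$) plus a vertex contribution of at most $N(|V_1|+|V_2|)=N^{3/2}<N^2$, hence total strictly less than $N^2k$. The point is that duplication scales the edge-weight signal by $N^2$ while the vertex count scales only by $N$, so the integrality gap of $N^2$ between consecutive weight levels strictly dominates the vertex term. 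A polynomial time algorithm for $\mathcal{P}$ would then decide, for every $k$, whether $G$ has a biclique of weight at least $k$, hence solve $\{-1,1\}$-MWEB exactly, contradicting Theorem \ref{inapprox1} unless $\mathsf{RP=NP}$. Your outer logic (reduce from $\{-1,1\}$-MWEB and invoke its hardness) matches the paper's; what is missing is this duplication step, which is the heart of the proof.
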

\begin{proof} We prove this by constructing a reduction from
$\{-1,1\}$-MWEB to Problem $\mathcal{P}$ as follows: for the given
input biclique $G = (V_1, V_2, E)$, make $N$ duplicates of $V_1$
and $N$ duplicates of $V_2$, where $N = (|V_1|+|V_2|)^2$. Connect
each copy of $V_1$ to each copy of $V_2$ in a way that is
identical to the input biclique, we then claim that there is a
size $k$ solution to $\{-1,1\}$-MWEB if and only if there is a
size $N^2 k$ solution to Problem $\mathcal{P}$.

If there is a size $k$ solution to $\{-1,1\}$-MWEB, then it is
straightforward that there is a solution to Problem $\mathcal{P}$
of size at least $N^2 k$. For the reverse direction, we show that
if no solution to $\{-1,1\}$-MWEB is of size at least $k$, then
the maximum solution to Problem $\mathcal{P}$ is strictly less
than $N^2 k$. Note a solution $U_1^N \times U_2^N$ to Problem
$\mathcal{P}$ consists of at most $N^2$ (not necessarily all
distinct) solutions to $\{-1,1\}$-MWEB, and each of them can
contribute at most $(k-1)$ in weight to $U_1^N \times U_2^N$, so
the total weight gained from edges is at most $N^2(k-1)$. And note
the total weight gained from vertices is at most $N(|V_1|+|V_2|) =
N\sqrt{N}$, therefore the weight is upper bounded by $N\sqrt{N} +
N^2(k-1) < N^2 k$ and this completes the proof.

As a conclusion, we have a polynomial time reduction from
$\{-1,1\}$-MWEB to Problem $\mathcal{P}$. Since no polynomial time
algorithm exists for $\{-1,1\}$-MWEB unless $\mathsf{RP=NP}$, the
same holds for Problem $\mathcal{P}$. \qed
\end{proof}

\begin{theorem} No polynomial time algorithm exists for
MDLH summarization, even in the 2-dimension 2-level case, unless
$\mathsf{RP=NP}$.
\end{theorem}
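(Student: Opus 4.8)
The plan is to reduce Problem $\mathcal{P}$, which is hard unless $\mathsf{RP=NP}$ by Lemma \ref{problemP}, to the decision version of $2$-dimensional $2$-level MDLH. Given a Problem $\mathcal{P}$ instance $(G = (V_1,V_2,E), k)$ with $\pm 1$ edge weights, I would build a binary matrix $M$ whose rows are indexed by $V_1$ and columns by $V_2$, setting $M[u][v] = 1$ exactly when $w(u,v) = +1$ and $M[u][v] = 0$ when $w(u,v) = -1$. The two height-$2$ hierarchies $T_1, T_2$ are taken in their simplest form (a root over the leaves), so that the only regions available are the whole matrix, a full row, a full column, and a single cell. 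The intuition is that Problem $\mathcal{P}$ was engineered precisely so that the additive term $|U_1| + |U_2|$ matches the \emph{region count} $|S|$ and the edge weight $\omega$ controls the \emph{hole count} $|H|$: covering the $1$-entries of a submatrix economically forces one to pay for the $0$-entries (the $-1$-edges) it drags along, which is exactly the negative contribution to $\omega$.

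The heart of the argument is a canonical-form lemma: I would show that some length-optimal summary $(S,H)$ can, without increasing $|S|+|H|$, be rewritten so that its regions describe a single biclique-like block $U_1 \times U_2$ of $M$, and that for such a summary the length satisfies an identity of the form $|S| + |H| = C - (|U_1| + |U_2| + \omega)$, where $C$ is a fixed quantity depending only on $M$ (e.g.\ the total number of $1$-entries and the matrix dimensions). Establishing this identity requires an exchange/uncrossing argument: starting from an arbitrary summary one repeatedly replaces ``mixed'' coverings (stray single cells, overlapping row and column regions, or an unhelpful whole-matrix region) by a cleaner row/column selection whose induced holes are exactly the $0$-cells inside the selected block, never worsening the total. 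As in the construction behind Problem $\mathcal{P}$, I would if necessary blow up $M$ with all-$1$ padding so that covering the padding forces the use of the large (row/column) regions rather than single cells, pinning the optimum to the intended block form.

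With the identity in hand the reduction is immediate: set the MDLH length threshold to $K = C - k$. Then $G$ admits a biclique with $|U_1| + |U_2| + \omega \geq k$ iff $M$ admits an MDLH summary of length at most $K$. The completeness direction (a good biclique yields a short summary) is the easy half, obtained by turning $U_1$ and $U_2$ into the row/column regions of $S$ and reading off $H$. The soundness direction (a short summary forces a good biclique) is where the canonical-form lemma does the real work, since an adversarial summary must be shown to be no shorter than the block-form summary of some genuine biclique. The construction is clearly polynomial-time and the answer is preserved exactly, so a polynomial-time algorithm for $2$-dimensional $2$-level MDLH would solve Problem $\mathcal{P}$, contradicting Lemma \ref{problemP} unless $\mathsf{RP=NP}$.

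I expect the main obstacle to be the canonical-form/exchange step together with the precise hole accounting: I must verify that no combination of single-cell, full-row, full-column, and whole-matrix regions permitted by the $2$-level hierarchy can beat the block-form cost, and that the constant $C$ is genuinely summary-independent. Getting the sign conventions and the padding size right so that the clean identity $|S|+|H| = C - (|U_1|+|U_2|+\omega)$ holds (rather than merely an inequality in one direction) is the delicate part, and it is what makes both the YES and NO directions of the reduction go through.
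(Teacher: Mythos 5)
Your high-level plan coincides with the paper's: encode the Problem $\mathcal{P}$ instance as a binary matrix with $2$-level hierarchies, prove an identity of the form $|S|+|H| = C - (|U_1|+|U_2|+\omega)$ with $C$ depending only on $M$, and set the decision threshold $K = C-k$. However, your concrete correspondence is inverted, and this breaks both directions of the reduction. In a $2$-level hierarchy a row region covers the \emph{entire} row, not just its intersection with the chosen block. So if the regions are the rows $U_1$, the columns $U_2$, and single cells for the remaining $1$-entries, the holes are all $0$-entries lying in some chosen row or column (i.e., everything outside the leftover block $R \times C$, where $R = V_1 \setminus U_1$ and $C = V_2 \setminus U_2$), and the single cells are exactly the $1$-entries inside $R \times C$. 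Hence the length is $(n_1-|R|)+(n_2-|C|)+(|Z|-z)+w = (n_1+n_2+|Z|) - (|R|+|C|+z-w)$, where $z$ and $w$ count the $0$- and $1$-entries of the \emph{uncovered} block; the weights inside the selected block $U_1 \times U_2$ do not enter the formula at all. Consequently your claimed identity, in which $\omega$ is the weight of the \emph{selected} block, is false, and no exchange/canonical-form argument can establish it: already the completeness direction fails, since the summary built from a good biclique $U_1 \times U_2$ has length $|Z|+|U_1|+|U_2|+\omega_G(R\times C)$, a quantity unrelated to $|U_1|+|U_2|+\omega_G(U_1\times U_2)$ (consider the all-$(+1)$ graph: Problem $\mathcal{P}$'s optimum is $n_1+n_2+n_1n_2$, yet your matrix is all ones and the MDLH optimum is $1$, so your threshold test answers ``no'').

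The fix is to invert both of your choices, which is what the paper does: the Problem $\mathcal{P}$ biclique must correspond to the rows and columns \emph{not} selected as regions (you are rewarded for what you leave uncovered), and accordingly the $1$-entries of $M$ must be the $-1$ edges and the $0$-entries the $+1$ edges. Then $z-w = \omega_G(R\times C)$, so minimizing $|S|+|H|$ is exactly maximizing $|R|+|C|+\omega_G(R\times C)$, and $K = (n_1+n_2+|Z|)-k$ works. With your convention, the MDLH optimum instead computes $\max_{R,C}\bigl(|R|+|C|-\omega_G(R\times C)\bigr)$, i.e., Problem $\mathcal{P}$ on the negated graph, which is not what Lemma \ref{problemP} supplies. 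Two further remarks: once the correspondence is right, no padding and no delicate uncrossing are needed --- for any fixed $(R,C)$ the cheapest completion of a summary is forced (row/column regions for everything selected, single cells for the leftover $1$s, holes for the covered $0$s), so the identity at the optimum follows from a one-line counting argument; indeed your all-$1$ padding would be counterproductive under the correct convention, since $1$-entries are the expensive ones. The only case requiring care is the whole-matrix region, which the paper rules out by restricting, without loss of generality, to instances having at least two more $0$-entries than $1$-entries.
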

\begin{proof} We prove this by showing that Problem $\mathcal{P}$
is a complementary problem of 2-dimensional 2-level MDLH.

Let the input 2D matrix $M$ be of size $n_1 \times n_2$, with a
tree of height 2 associated with each dimension. Without loss of
generality, we only consider the 'sparse' case where the number of
1-entries is less than the number of 0-entries by at least 2 so
that the optimal solution will never contain the whole matrix as
one of its regions. Let $S$ be the set of regions in a solution.
Let $R$ and $C$ be the set of rows and columns not included in
$S$. Let $Z$ be the set of all zero entries in $M$. Let $z$ be the
total number of zero entries in the $R \times C$ 'leftover' matrix
and let $w$ be the total number of 1-entries in it. MDLH tries to
minimize the following:
$$
(n_1-|R|)+(n_2 - |C|) + (|Z|-z) + w = (n_1+n_2+|Z|)-(|R|+|C|+z-w)
$$ Since $(n_1+n_2+|Z|)$ is a fixed quantity for any given input
matrix, the 2-dimensional 2-level MDLH problem is equivalent to
maximizing $(|R|+|C|+z-w)$, which is precisely the definition of
Problem $\mathcal{P}$.

Therefore, 2-dimensional 2-level MDLH is a complementary problem
to Problem $\mathcal{P}$ and by Lemma \ref{problemP} we conclude
that no polynomial time algorithm exists for 2-dimensional 2-level
MDLH unless $\mathsf{RP = NP}$. \qed
\end{proof}

\section{Concluding Remarks}

Maximum weighted edge biclique and its variants have received much
attention in recently years because of it wide range of
applications in various fields including machine learning,
database, and particularly bioinformatics and computational
biology, where many computational problems for the analysis of
microarray data are closely related. To tackle these applied
problems, various kinds of heuristics are proposed and
experimented and it is not known whether these algorithms give
provable approximations. In this work, we answer this question by
showing that it is highly unlikely (under the assumption
$\mathsf{RP \neq NP}$) that good polynomial time approximation
algorithm exists for maximum weighted edge biclique for a wide
range of choices of weight; and we further give specific
applications of this result to two applied problems. We conclude
our work by listing a few open questions.

1. We have shown that $\{\Theta(-\eta^{\delta}), 1\}$-MWEB is
$n^\epsilon$-inapproximable for $\delta \in (-\frac{1}{2},
\frac{1}{2})$; also it is easy to see that (i) the problem is in
$\mathsf{P}$ when $\delta \leq -1$, where the entire input graph
is the optimal solution; (ii) for any $\delta \geq 1$, the problem
is equivalent to MEB, which is conjectured to be
$n^\epsilon$-inapproximable \cite{Fei04}. Therefore it is natural
to ask what is the approximability of the $\{-n^{\delta},
1\}$-MWEB problem when $\delta \in (-1, -\frac{1}{2}]$ and $\delta
\in [\frac{1}{2}, 1]$. In particular, can this be answered by a
better analysis of Lemma \ref{boostingLemma}?

2. We are especially interested in $\{-1,1\}$-MWEB, which is
closely related to the formulations of many natural problems
\cite{Ban04,Bu04,Bu05,Tan02}. We have shown that no polynomial
time algorithm exists for this problem unless $\mathsf{RP = NP}$,
and we believe this problem is $\mathsf{NP}$-complete, however a
proof has eluded us so far.

\end{document}